\tikzstyle{internal} = [draw, fill, shape=circle, text=black, inner sep=4pt]
\tikzstyle{external} = [shape=circle, draw]
\tikzstyle{square}   = [draw, fill, rectangle, inner sep=5pt]
\tikzset{lnode/.style = {
		circle, 
		draw=cyan!30!black, 
		thick,
		inner sep=1.5pt,
		minimum size=15pt } }
\tikzset{uedge/.style = {
		draw=cyan!20!black, 
		very thick} }
\renewcommand{\title}[1]{\vspace{\fill}
	\eject\addtolength{\baselineskip}{4pt}
	{\bfseries\LARGE #1}\\[3mm]\addtolength{\baselineskip}{-4pt}}
\renewcommand{\author}[3]{\parbox[t]{75mm}
	{\begin{center}{\scshape #1}\\[3mm] #2\\
			{\ttfamily #3} \end{center}}}
\newtheorem{thm}{\bfseries Theorem}
\newtheorem{lem}[thm]{\bfseries Lemma}        
\newtheorem{remark}[thm]{\bfseries Remark}    
\newtheorem{prop}[thm]{\bfseries Proposition} 
\newtheorem{cor}[thm]{\bfseries Corollary}
\DeclareMathOperator\opt{\textsc{OPT}}
\DeclareMathOperator\alg{\textsc{ALG}}
\newenvironment{proof}{\medskip                    
	\noindent{\scshape Proof:}}{\quad $\Box$\medskip}  
\titlespacing\section{0pt}{10pt}{2pt}
\titlespacing\subsection{0pt}{8pt}{2pt}
\titlespacing\subsubsection{0pt}{6pt}{2pt}
\renewcommand{\paragraph}{%
	\@startsection{paragraph}{4}%
	{\z@}{1.3ex \@plus 1ex \@minus 0.2ex}{-1em}%
	{\normalfont\normalsize\bfseries}%
}
\def\thm@space@setup{\thm@preskip=0pt
	\thm@postskip=4pt}
\def\final{1}  
\def\iflong{\iffalse}
\newcommand{\kanote}[1]{{\color{blue}[{\tiny \textbf{Karthik:} \bf #1}]\marginpar{\color{blue}*}}}
\newcommand{\knote}[1]{{\color{red}[{\tiny \textbf{Kristóf:} \bf #1}]\marginpar{\color{red}*}}}
\newcommand{\tnote}[1]{{\color{blue}[{\tiny \textbf{Tamás:} \bf #1}]\marginpar{\color{blue}*}}}
\newcommand{\dnote}[1]{{\color{orange}[{\tiny \textbf{Dani:} \bf #1}]\marginpar{\color{orange}*}}}
\newcommand{\kanote}[1]{}
\newcommand{\knote}[1]{}
\newcommand{\tnote}[1]{}
\newcommand{\dnote}[1]{}
\DeclareMathOperator*{\argmin}{arg\,min}
\newcommand{\bR}{\mathbb{R}}
\newcommand{\bZ}{\mathbb{Z}}
\newcommand{\cI}{\mathcal{I}}
\newcommand{\cB}{\mathcal{B}}
\newcommand{\cH}{\mathcal{H}}
\newcommand{\cP}{\mathcal{P}}
\newcommand{\cM}{\mathcal{M}}
\newcommand{\subkpart}{\textsc{Sub-k-P}\xspace}
\newcommand{\symsubkpart}{\textsc{SymSub-k-P}\xspace}
\newcommand{\monsubkpart}{\textsc{MonSub-k-P}\xspace}
\newcommand{\submpart}{\textsc{Sub-MP}\xspace}
\newcommand{\symsubmpart}{\textsc{SymSub-MP}\xspace}
\newcommand{\monsubmpart}{\textsc{MonSub-MP}\xspace}
\newcommand{\gcovmpart}{\textsc{GCov-MP}\xspace}
\newcommand{\mcut}{\textsc{Multiway-Cut}\xspace}
\newcommand{\kcut}{\textsc{$k$-Cut}\xspace}
\newcommand{\submcp}{\textsc{Sub-MCP}\xspace}
\newcommand{\symsubmcp}{\textsc{SymSub-MCP}\xspace}
\newcommand{\monsubmcp}{\textsc{MonSub-MCP}\xspace}
\newcommand{\mcmcut}{\textsc{MC-Multiway-Cut}\xspace}
\newcommand{\mcmcutdoub}{\textsc{Double MC-Multiway-Cut}\xspace}
\newcommand{\mcmcutcom}{\textsc{Common MC-Multiway-Cut}\xspace}
\newcommand{\gcovmcp}{\textsc{GCov-MCP}\xspace}
\newcommand{\symsubdmcp}{\textsc{SymSub-DMCP}\xspace}
\newtcolorbox{probbox}{arc=6pt,
	colback=white!100,
	colframe=black!50,
	before skip=6pt,
	after skip=6pt,
	boxsep=1pt,
	left=6pt,
	right=6pt,
	top=4pt,
	bottom=4pt}
\newcommand{\searchprob}[3]{{\small
		\begin{center}%
			\begin{minipage}{\linewidth}%
				\begin{probbox}
					\textsc{#1}\\[0.4ex]
					\textbf{Input:} #2\\[0.4ex]
					\textbf{Goal:} #3
				\end{probbox}
			\end{minipage}%
		\end{center}
}}
\begin{document}
	\begin{center}
		
		\title{Approximating Submodular Matroid-Constrained Partitioning}
		\author{Kristóf Bérczi\footnotemark[1]  
		}{
			MTA-ELTE Matroid Optimization Group \\ HUN-REN--ELTE Egerv\'{a}ry Research Group\\
			Department of Operations Research \\
			E\"{o}tv\"{o}s Lor\'{a}nd
			University, Budapest, Hungary
		}{
			kristof.berczi@ttk.elte.hu
		}\footnotetext[1]{The research was supported by the Lend\"ulet Programme of the Hungarian Academy of Sciences -- grant number LP2021-1/2021, by the Ministry of Innovation and Technology of Hungary from the National Research, Development and Innovation Fund -- grant numbers ADVANCED 150556 and ELTE TKP 2021-NKTA-62, and by Dynasnet European Research Council Synergy project -- grant number ERC-2018-SYG 810115.}
		\author{
			Karthekeyan Chandrasekaran\footnotemark[2] 
		}{
			Grainger College of Engineering, 
			\\
			University of Illinois,
			\\
			Urbana-Champaign, USA
		}{
			karthe@illinois.edu
		}
		\footnotetext[2]{Supported in part by NSF grant CCF-2402667.}
		\author{
			Tamás Király\footnotemark[1]
		}{ 
			HUN-REN{--}ELTE Egerváry Research Group\\
			Department of Operations Research \\
			Eötvös Loránd University, Budapest, Hungary
		}{
			tamas.kiraly@ttk.elte.hu
		}
		\author{
			Daniel P. Szabo\footnotemark[1]
		}{ 
			Department of Operations Research \\
			Eötvös Loránd University, Budapest, Hungary
		}{
			dszabo2@wisc.edu
		}
		
		
	\end{center}
	
	
	\begin{quote}
		{\bfseries Abstract:}
		The submodular partitioning problem asks to minimize, over all partitions $\mathcal{P}$ of a ground set $V$, the sum of a given submodular function $f$ over the parts of $\mathcal{P}$. The problem has seen considerable work in approximability, as it encompasses multiterminal cuts on graphs, $k$-cuts on hypergraphs, and elementary linear algebra problems such as matrix multiway partitioning. This research has been divided between the fixed terminal setting, where we are given a set of terminals that must be separated by $\mathcal{P}$, and the global setting, where the only constraint is the size of the partition. We investigate a generalization that unifies these two settings: minimum submodular matroid-constrained partition. In this problem, we are additionally given a matroid over the ground set and seek to find a partition $\mathcal{P}$ in which \emph{there exists} some basis that is separated by $\mathcal{P}$. We explore the approximability of this problem and its variants, reaching the state of the art for the special case of symmetric submodular functions, and provide results for monotone and general submodular functions as well.
	\end{quote}
	
	\begin{quote}
		{\bf Keywords:} Approximation algorithms, Matroid constraints, Multiway cut, Submodular partitioning
	\end{quote}
	\vspace{5mm}

	
	\section{Introduction}
	
	Many motivations, as well as techniques, for submodular optimization stem from graph partitioning problems, making them a natural starting point. Given an undirected graph $G=(V,E)$, edge weights $w\in\bR^E_+,$ and a set $T\subseteq V$ of $k$ terminals, the \emph{Multiway Cut} (\mcut) problem asks for a partition $\cP=(V_1,\dots, V_k)$ of $V$ minimizing $\sum_{i=1}^k d_w(V_i)$ such that each part $V_i$ contains exactly one terminal from $T$. A well-known special case is $k=2$, which corresponds to the polynomial-time solvable \emph{Min-cut} problem. For $k \geq 3$, the problem is NP-hard~\cite{Dahlhausetal} but admits approximation algorithms~\cite{SharmaVondrak}. In contrast, the \emph{$k$-cut} problem (\kcut) takes as input only the graph $G = (V, E)$, edge weights $w\in\mathbb{R}^E_+$, and an integer $k$, and asks for a partition $\cP=(V_1,\dots, V_k)$ of $V$ into $k$ non-empty parts minimizing  $\sum_{i=1}^k d_w(V_i)$. This problem is polynomial-time solvable for every fixed $k$, but NP-hard when $k$ is part of the input. In this case, a $(2-2/k)$-approximation is possible~\cite{kcut2apx}, and this approximation is tight under the Small Set Expansion Hypothesis~\cite{kcut}.
	
	These problems have been generalized to arbitrary submodular functions. A set function $f\colon 2^V\to \bR$ is \emph{submodular} if $f(A) + f(B)\geq f(A\cup B) + f(A\cap B) $ for all $A,B\subseteq V$. Additionally, it is \emph{symmetric} if $f(A)=f(V \setminus A)$ for all $A\subseteq V$, and \emph{monotone} if $f(A)\leq f(B)$ for all $A\subseteq B\subseteq V$. For ease of discussion, we assume throughout that $f(\emptyset) = 0$. In the \emph{Submodular $k$-partition Problem} (\subkpart), we are given a submodular function $f\colon 2^V\to \bR_{\ge 0}$ and an integer $k$, and the goal is to find a partition $\cP=\{V_1,\dots, V_k\}$ of $V$ into $k$ non-empty parts minimizing $\sum_{i=1}^k f(V_i)$. When $f$ is the cut function of an undirected graph, then it is symmetric and submodular, hence we get back the \kcut problem. \subkpart does not admit a constant-factor approximation assuming the Exponential Time Hypothesis (ETH)~\cite{k-wayhypergraphcuthardness}, and the best known approximation factor is $O(k)$~\cite{greedysplit, submod_kapx}. However, for certain special classes of submodular functions, it admits constant factor approximations. For symmetric submodular functions (\symsubkpart), the problem admits a tight $(2-2/k)$-approximation~\cite{greedysplit, ppskarthikwang, santiago-for-lower-bound}, while for monotone submodular functions (\monsubkpart), the problem admits a tight $4/3$-approximation ~\cite{ppskarthikwang, santiago-for-lower-bound}. 
	
	In the \emph{Submodular Multiway Partition} problem (\submpart), we are given a submodular function $f\colon 2^V\to \bR_+$ and $k$ terminals $t_1, \dots, t_k$, and the goal is to find a partition $\cP=(V_1,\dots, V_k)$ with $t_i\in V_i$ for every $i\in [k]$ minimizing $\sum_{i=1}^k f(V_i)$. When $f$ is the cut function of an undirected graph, then we get back the \mcut problem. \submpart admits a $2$-approximation~\cite{chekuri-ene}. For symmetric submodular functions (\symsubmpart), the problem admits a $3/2$-approximation~\cite{chekuri-ene}, while for monotone submodular functions (\monsubmpart), the problem admits a $4/3$-approximation. An important special case of \monsubmpart is when $f$ is a \emph{graph coverage function} (\gcovmpart), that is, $f(S)\coloneqq \sum_{uv\in E \colon  \{u,v\}\cap S \neq \emptyset} w(uv)$ for every $S\subseteq V$. We note that the objective function of \gcovmpart differs from the objective function of \mcut  by an additive factor of $|E|$. Yet, \gcovmpart differs substantially from \mcut  in terms of approximability: \gcovmpart admits a $1.125$-approximation~\cite{fixedterms}, while the best known approximation for \mcut is $1.2965$~\cite{SharmaVondrak}.
	
	Algorithmic techniques for global and fixed terminal partitioning problems differ significantly: the key difference is that fixed terminal versions allow for linear programming techniques. On the other hand, approximation algorithms for global partitioning problems have relied on global structural aspects of submodular functions, such as Gomory-Hu trees, greedy splitting, and cheapest singleton algorithms. Nevertheless, the approximation results are not always so different. For example, both \monsubkpart and \monsubmpart admit a $4/3$-approximation. Previous work has also generalized these two regimes in some way, such as in \cite{prev_ppr} and \cite{steiner-k-cut}. One of the motivations of this work is to present a general model for partitioning problems that unifies the global and fixed-terminal variants, and to investigate approximation algorithms for this general model. The choice of encoding the terminal choices as the bases of some matroid is not without precedent: the same was effective for the minimum Steiner tree problem~\cite{matroidsteiner}.
	
	\subsection{Results}
	
	We introduce a new problem that unifies the global and the fixed terminal partitioning problems. The \emph{Submodular Matroid-constrained Partitioning} problem (\submcp) is defined as follows:
	
	\searchprob{\submcp}{A submodular function $f\colon 2^V\to \bR_+$ given by a value oracle, and a rank-$k$ matroid $\cM=(V, \cB)$ given by an independence oracle, where $\cB$ is the family of bases of $\cM$.}{Find a solution to
		\[ \min\left\{ \sum_{i=1}^k f(V_i) \colon  \{ V_i\}_{i=1}^k \text{ is a partition of } V \text{ s.t. there is }
		B\in \cB \text{ with } |B\cap V_i|=1 \text{ for all } i\in [k] \right\}.\]}
	
	In other words, our aim is to find a $k$-partition that separates the elements of some basis of the matroid. If the input function is symmetric or monotone submodular, then we denote the resulting problem as \symsubmcp or \monsubmcp, respectively. A special case of \symsubmcp is \emph{Matroid-constrained Multiway Cut} (\mcmcut), formally defined as follows. 
	
	\searchprob{\mcmcut}{A graph $G=(V,E)$, edge weights $w\in\bR^E_+$, and a rank-$k$ matroid $\cM=(V, \cB)$ given by an independence oracle, where $\cB$ is the family of bases of $\cM$.}{Find a solution to
		\[ \min\left\{ \sum_{i=1}^k d_w(V_i) \colon \{ V_i\}_{i=1}^k \text{ is a partition of } V \text{ s.t. there is }
		B\in \cB \text{ with } |B\cap V_i|=1 \text{ for } i\in [k] \right\}.\]}
	
	We study several algorithmic approaches for various special classes of matroids. \submcp contains \emph{Hypergraph $k$-cut} as a special case, so it does not admit a constant-factor approximation assuming the ETH. The first problem we study is \symsubmcp. In this case, many of the algorithms for \subkpart extend to \submcp while achieving the same approximation guarantees.
	
	Algorithm~\ref{alg:gh_greedy} constructs a Gomory-Hu tree for a given symmetric submodular function and then greedily selects the minimum-weight edges for which a basis exists that respects the partition induced by the chosen edges. 
	\begin{thm}\label{thm:symGH}
		The Gomory-Hu tree algorithm provides a $(2-2/k)$-approximation for \symsubmcp.
	\end{thm}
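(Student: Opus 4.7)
The algorithm first constructs the Gomory--Hu tree $T$ of $f$ (which exists because $f$ is symmetric submodular) and then scans the edges of $T$ in non-decreasing weight order, greedily adding an edge to the chosen set $F$ whenever $F$ remains extendable to a size-$(k-1)$ subset of $E(T)$ inducing a partition of $V$ admitting a basis transversal of $\mathcal{M}$; the procedure halts once $|F|=k-1$. The plan is to combine two inequalities---one upper bounding the algorithm's partition cost by the tree-edge weights, the other producing a cheap matroid-feasible reference set in $T$---with a greedy-optimality step.

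The first inequality asserts that for any $F\subseteq E(T)$ of size $k-1$ with components $V_1,\dots,V_k$ of $T-F$,
$$\sum_{i=1}^k f(V_i)\;\le\;2\sum_{e\in F}w_T(e),$$
where $w_T(e)$ denotes the Gomory--Hu weight of $e$. I would prove this by induction on $|F|$: symmetry gives the base case $f(V)=0$, and when a new edge $e$ splits a current part $V_j'$ into $B_e = V_j'\cap A_e$ and $C_e = V_j'\setminus A_e$ (where $A_e$ is one side of the cut induced by $e$ in $T$), I would invoke submodularity on the pairs $(A_e,V_j')$ and $(V\setminus A_e,V_j')$ to bound $f(B_e)+f(C_e)$ in terms of $f(A_e)$, $f(V_j')$, $f(A_e\cup V_j')$, and $f((V\setminus A_e)\cup V_j')$; symmetry together with submodularity on the disjoint sets $A_e\setminus V_j'$ and $(V\setminus A_e)\setminus V_j'$ (whose union is $V\setminus V_j'$) then shows the net increase is at most $2w_T(e)$.

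For the second ingredient, let $\mathcal{P}^*=(V_1^*,\dots,V_k^*)$ be an optimal partition with separating basis $B^*=\{b_1^*,\dots,b_k^*\}$, $b_i^*\in V_i^*$, and assume WLOG that $f(V_k^*)$ is maximum, so $\sum_{i<k} f(V_i^*)\le (1-1/k)\opt$. For each $i<k$, the Gomory--Hu property ensures that the $b_i^*$--$b_k^*$ path in $T$ contains an edge of weight at most the minimum $b_i^*$--$b_k^*$ cut value, hence at most $f(V_i^*)$. A Saran--Vazirani-style selection (rooting $T$ at $b_k^*$ and choosing representative edges inside the Steiner subtree of $B^*$) yields $k-1$ distinct edges $F^*\subseteq E(T)$ separating $B^*$ with $w_T(F^*)\le (1-1/k)\opt$; since $B^*$ is a basis of $\mathcal{M}$, the set $F^*$ is matroid-feasible. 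Combined with the greedy-optimality claim $w_T(F)\le w_T(F^*)$, this yields $\sum_i f(V_i)\le 2w_T(F)\le 2w_T(F^*)\le (2-2/k)\opt$.

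The main obstacle is the greedy-optimality step: to justify $w_T(F)\le w_T(F^*)$, I would show that the family of tree-edge subsets extendable to a matroid-feasible $(k-1)$-set forms the independent sets of a matroid of rank $k-1$ on $E(T)$, after which the standard matroid greedy is exact. The natural route is via matroid union/intersection, exploiting that each tree edge corresponds to a bipartition of $V$ and that the basis-separation condition on a tree can be recast as an intersection of $\mathcal{M}$ with a partition matroid derived from the components of $T-F$; alternatively, a direct exchange argument that trades an expensive greedy edge for a feasible edge of $F^*$ while preserving basis-separability can substitute for the abstract matroid proof.
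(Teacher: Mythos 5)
Your proposal follows essentially the same three-step structure as the paper's proof: (i) bound the algorithm's cost by twice the total Gomory--Hu weight of the chosen tree edges via submodularity and symmetry, (ii) construct from the optimal partition a matroid-feasible set of $k-1$ tree edges of weight at most $(1-1/k)\opt$ using the Gomory--Hu min-cut property and a Saran--Vazirani-style edge assignment, and (iii) conclude via greedy optimality on the tree edges. Your inductive proof of the first inequality is correct (the paper instead writes each part $V_i$ directly as an intersection of Gomory--Hu half-sets and applies submodularity once, but the two arguments are equivalent), and your second ingredient matches the paper's construction of $\opt_{GH}$.

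The genuine gap is that you correctly identify step (iii) as the crux but leave it unproven. The paper devotes a separate lemma (its Lemma~\ref{lem:tree_matroid}) to showing that $\cI' = \{X'\subseteq E(T)\colon\text{the components of }T-X'\text{ admit an independent transversal in }\cM\}$ satisfies the independence axioms, which is exactly what licenses the greedy step. Of your two suggested routes, the ``matroid union/intersection'' one is misleading: intersecting $\cM$ with the partition matroid induced by the components of $T-F$ tells you how to \emph{test} membership in $\cI'$ (via a matroid intersection algorithm), but it does not by itself show that $\cI'$ is a matroid, since matroid intersection is not closed under the matroid axioms. Your second route---a direct exchange argument---is the right one, and it is what the paper carries out: given $X',Y'\in\cI'$ with $|X'|<|Y'|$, one picks independent transversals $X,Y\in\cI$ maximizing $|X\cap Y|$, applies the exchange axiom of $\cM$ to obtain $u\in Y\setminus X$ with $X+u\in\cI$, argues that the element $v\in X$ in the same component of $T-X'$ as $u$ must lie in $Y$ (otherwise swapping $v$ for $u$ contradicts the choice of $X,Y$), and then picks the tree edge $e\in Y'$ on the $u$--$v$ path, showing $X'+e\in\cI'$. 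This argument is short but not automatic, and without it your greedy-optimality claim $w_T(F)\le w_T(F^*)$ is unsupported; filling it in completes the proof.
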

	
	
	The greedy splitting algorithm from~\cite{greedysplit} proceeds in a greedy fashion somewhat differently than the Gomory-Hu tree approach. It first finds the cheapest submodular cut $S$ of $V$, and then iteratively finds, for a given intermediate partition $\cP_i$, the next cheapest $\argmin \{f(X) +f(W \setminus X) - f(W):X\subset W, W\in \cP_{i}\}$ split to increase the size of the partition by one. This algorithm, as we show in Lemma \ref{lem:main}, generalizes to the matroid-constrained versions as well to get the same $(2-2/k)$-approximation factor.
	
	\begin{thm}\label{thm:greedysplitsym}
		The greedy splitting algorithm provides a $(2-2/k)$-approximation for \symsubmcp.
	\end{thm}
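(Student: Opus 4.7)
The plan is to adapt the analysis of greedy splitting for \symsubkpart from~\cite{greedysplit} to the matroid-constrained setting. I would first formally define the algorithm: start with $\cP_0 = \{V\}$, and at each iteration $i \in [k-1]$, select, among all splits $(X, W \setminus X)$ with $W \in \cP_{i-1}$ and $\emptyset \neq X \subsetneq W$ such that the refined partition $\cP_i := (\cP_{i-1} \setminus \{W\}) \cup \{X, W \setminus X\}$ remains \emph{matroid-feasible} (i.e., extendable to a $k$-partition separating some basis of $\cM$), the one minimizing the cost increase $c_i := f(X) + f(W \setminus X) - f(W)$. By a Rado-type rank condition, matroid-feasibility of $\cP_i$ is equivalent to $r_\cM\bigl(\bigcup_{W \in S} W\bigr) \geq |S|$ for every $S \subseteq \cP_i$, and whenever $|\cP_{i-1}| < k$ a feasible split exists. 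Because $f$ is symmetric with $f(\emptyset) = 0$, we have $f(V) = 0$, so the algorithm's output value equals $\sum_{i=1}^{k-1} c_i$.

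Fix an optimal partition $\cP^* = (V_1^*, \ldots, V_k^*)$ with separating basis $B^* = \{b_1^*, \ldots, b_k^*\}$, $b_j^* \in V_j^*$, and set $\alpha_j := f(V_j^*)$. The approximation ratio $(2-2/k)$ follows from the key bound $\sum_{i=1}^{k-1} c_i \leq 2(\opt - \max_j \alpha_j)$ together with $\max_j \alpha_j \geq \opt/k$. To prove this bound, I would charge the cost $c_i$ to some OPT part $V_{\pi(i)}^*$ via a candidate-split argument that mirrors the unconstrained case~\cite{greedysplit}: at iteration $i$, I exhibit an OPT part $V_l^*$ for which there is a matroid-feasible split of some $W \in \cP_{i-1}$ isolating (a suitable piece of) $V_l^*$ from the rest of $W$, whose cost, by symmetry and submodularity, is at most $2\alpha_l$. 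The greedy choice is no worse than this candidate, so $c_i \leq 2\alpha_{\pi(i)}$; by designing $\pi$ to be an injection from $[k-1]$ into $[k]$ that avoids the index of the largest $\alpha_j$, summing across iterations yields the key bound.

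The main obstacle is coordinating the construction of $\pi$ with the feasibility certification. Since the greedy algorithm may follow a trajectory of splits that are not coarsenings of $\cP^*$, the OPT-aligned candidate splits are not automatically matroid-feasible with respect to the witness basis $B^*$; I would rely on the matroid exchange axiom to swap elements between $B^*$ and a current witness transversal of $\cP_{i-1}$, producing a basis that witnesses feasibility of the candidate. Simultaneously, maintaining injectivity of $\pi$ while avoiding the largest $\alpha_j$ requires an inductive bookkeeping of which OPT parts remain "unaccounted for" at each stage. This combined argument is the technical core abstracted as Lemma~\ref{lem:main} in the paragraph preceding the statement; given that lemma, Theorem~\ref{thm:greedysplitsym} follows by the summation computation above.
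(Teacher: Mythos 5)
Your proposal is correct and follows the paper's approach: the key bound $\sum_{i=1}^{k-1} c_i \leq 2\bigl(\opt - \max_j \alpha_j\bigr)$ is precisely the $i=k$ instance of Lemma~\ref{lem:main} after applying symmetry (so $f(V\setminus V_j^*)=f(V_j^*)$ and $f(V)=f(\emptyset)=0$), and the paper derives the theorem from Lemma~\ref{lem:main} by the same summation you describe. Your sketched charging argument for the key bound — exchanging elements between $B^*$ and the current witness transversal to certify feasibility of an OPT-aligned candidate split, with inductive bookkeeping ensuring each $c_i$ is charged to a distinct OPT part other than the most expensive one — is exactly the mechanism of the paper's inductive proof of Lemma~\ref{lem:main} (its Case 1/Case 2 swap procedure), so you have correctly identified both the decomposition and the technical core.
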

	
	Next, we turn to monotone submodular $k$-partitioning. Here, a natural adaptation of the cheapest singleton algorithm is a fast $(2-1/k)$-approximation. In the problem without the matroid constraint, the cheapest singleton algorithm proceeds by choosing $k-1$ elements with the cheapest singleton cost, namely such that $f(v_1)\leq  \dots \leq f(v_{k-1}) \leq f(v_j)$ for all $j\geq k$. It then returns the partition $(\{v_1\},\dots, \{v_{k-1}\}, \{v_k,\dots, v_n\})$. To see why this yields a $(2-1/k)$-approximation, let $\cP^* = \{V_1^*,\dots, V_k^*\}$ be the optimal partition, ordered such that $f(V_1^*)\leq\dots\leq f(V_k^*)$. By monotonicity, we have $\sum_{i=1}^{k-1} f(v_i) \leq \sum_{i=1}^{k-1} f(V_i^*) \leq (1-1/k) \sum_{i=1}^k f(V_i^*) $. Moreover, monotonicity and submodularity imply that $f(\{v_k,\dots, v_n\}) \leq f(V) \leq \sum_{i=1}^k f(V_i^*)$. For the matroid-constrained setting, we can weight the matroid by the value of $f$ on the singletons. We then select a minimum weight
	independent set of size $k-1$ in the matroid, and return the partition where these $k-1$ elements are singleton classes, and the last class consists of the remaining elements. The approximation factor follows by the same argument. 
	
	We can obtain a slightly better bound using the greedy splitting technique from \cite{greedysplit} by extending their main lemma to the matroid-constrained setting.
	
	\begin{thm}\label{thm:greedysplitmon}
		The greedy splitting algorithm provides a $(2-2/k)$-approximation for \monsubmcp.
	\end{thm}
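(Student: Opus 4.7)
The plan is to mirror the proof of Theorem~\ref{thm:greedysplitsym}, keeping the matroid-constrained extension of Lemma~\ref{lem:main} intact but replacing its symmetric-submodular cost bound with a monotone-submodular analogue.

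First, I would pin down the algorithm in the monotone setting. Start from $\cP_1 = \{V\}$, which is trivially matroid-feasible since every basis of $\cM$ sits inside $V$. At iteration $i$, search all pairs $(W,X)$ with $W\in\cP_i$ and $\emptyset\neq X\subsetneq W$ such that the refinement $\cP_{i+1}=(\cP_i\setminus\{W\})\cup\{X, W\setminus X\}$ still admits a basis $B\in\cB$ with $|B\cap Y|=1$ for every $Y\in\cP_{i+1}$, and pick the pair minimizing the split cost $f(X)+f(W\setminus X)-f(W)$. After $k-1$ iterations, return $\cP_k$.

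Next, fix an optimal solution $\cP^*=\{V_1^*,\dots,V_k^*\}$ with witnessing basis $B^*=\{b_1^*,\dots,b_k^*\}$ where $b_j^*\in V_j^*$, and set $\opt=\sum_j f(V_j^*)$. I would invoke the matroid-constrained version of Lemma~\ref{lem:main}: for a matroid-feasible $\cP_i$, consider the family of candidate splits of the form $(W,\, W\cap V_j^*)$ over $W\in\cP_i$ and $j\in[k]$ with $W\cap V_j^*\neq\emptyset$ and $W\setminus V_j^*\neq\emptyset$. By matroid exchange applied between $B^*$ and any basis pinned to $\cP_i$, at least $k-i$ of these candidates preserve matroid feasibility; this is exactly the ingredient already used in the proof of Theorem~\ref{thm:greedysplitsym}, so no new matroid-theoretic work is needed here.

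The monotone-specific step is to bound the sum of split costs across these $k-i$ matroid-feasible candidates by $2(\opt - f(\cP_i))$. Averaging then gives $f(\cP_{i+1})-f(\cP_i)\leq 2(\opt-f(\cP_i))/(k-i)$, which I would telescope as in the symmetric analysis to obtain $f(\cP_k)\leq (2-2/k)\,\opt$.

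The main obstacle is precisely the monotone cost bound. In the symmetric case the identity $f(W\setminus W_j)=f(V\setminus(W\setminus W_j))$ packages the marginals into a clean factor of $2$, and that identity is unavailable here. I expect the replacement to combine the bound $\sum_{W,j} f(W\cap V_j^*)\leq \sum_j f(V_j^*)=\opt$, derived from submodularity on the common refinement $\cP_i\wedge\cP^*$, with a monotone bound $f(W\setminus V_j^*)\leq f(W)$ that absorbs the $-f(W)$ correction across the $k-i$ candidates, together yielding the target $2(\opt-f(\cP_i))$ estimate. Verifying that this monotone-adapted main lemma of \cite{greedysplit} survives the matroid-feasibility restriction is the step that demands the most care; once established, the telescoping calculation is identical to the symmetric case.
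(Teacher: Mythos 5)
Your proposal departs from the paper's approach in a way that introduces real gaps. The paper's Theorem~\ref{thm:greedysplitmon} is a short corollary of Lemma~\ref{lem:main}: that lemma gives the cumulative bound $f(\cP_k)\leq\sum_{j=1}^{k-1}\bigl(f(V_j^*)+f(V\setminus V_j^*)\bigr)-(k-2)f(V)$, and then monotonicity ($f(V\setminus V_j^*)\leq f(V)$ for $j\leq k-2$) cancels the negative term, leaving $f(\cP_k)\leq\sum_{j=1}^{k-1}f(V_j^*)+f(V\setminus V_{k-1}^*)$. Ordering $f(V_1^*)\leq\dots\leq f(V_k^*)$ then bounds $\sum_{j=1}^{k-2}f(V_j^*)\leq(1-2/k)f(\cP^*)$ and $f(V_{k-1}^*)+f(V\setminus V_{k-1}^*)\leq f(\cP^*)$. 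No per-step averaging appears anywhere in this argument or in the symmetric Theorem~\ref{thm:greedysplitsym}, both of which use Lemma~\ref{lem:main} as a black box.

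Your plan replaces this with a per-step averaging recursion: find ``at least $k-i$'' matroid-feasible candidate splits of $\cP_i$, bound their total split cost by $2(\opt-f(\cP_i))$, and telescope. Three concrete problems. First, the claim that the exchange argument in Lemma~\ref{lem:main} yields $k-i$ feasible candidates is not what the lemma gives or what its proof does; that proof produces exactly one feasible split per inductive step, via a careful chain of exchanges between $I_{i-1}$ and $I_i$ to avoid collisions of representatives. You would need a genuinely new matroid argument to get $k-i$ of them, and it is not obvious one exists. Second, even granting the recursion $f(\cP_{i+1})-f(\cP_i)\leq 2(\opt-f(\cP_i))/(k-i)$, the telescoping does not deliver $(2-2/k)\opt$: at $i=k-1$ the bound reads $f(\cP_k)\leq 2\opt-f(\cP_{k-1})$, which is only $\leq(2-2/k)\opt$ if $f(\cP_{k-1})\geq(2/k)\opt$, and nothing in the setup guarantees that. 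Third, your proposed bound on the sum of split costs is itself left as a speculation (``I expect the replacement to combine...''), which is the very heart of the matter. The fix is simply to invoke Lemma~\ref{lem:main} as stated and then do the short monotone calculation above; no new per-step lemma is needed.
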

	
	We additionally show that this analysis is tight for the greedy splitting algorithm, even without the matroid constraints, i.e., there exist instances of \monsubkpart for which the approximation factor of greedy splitting is at least $(2-2/k)$ -- see Lemma~\ref{lem:tightness}. 
	
	Another consequence of the applicability of the greedy splitting algorithm is that we can extend the linear approximation guarantee from \cite{greedysplit} to general \submcp. We state this in Theorem \ref{thm:general}.
	
	\begin{thm}\label{thm:general}
		The greedy splitting algorithm provides a $(k-1)$-approximation for \submcp.
	\end{thm}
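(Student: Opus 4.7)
The plan is to apply the matroid-constrained greedy splitting algorithm and to show that the marginal cost of each of its $k-1$ refinement steps is at most $\opt-f(V)$; writing $\cP_i$ for the partition after the algorithm has produced $i$ parts (so $\cP_1=\{V\}$ and $\cP_k$ is the output), this telescopes to
\[
\sum_{W\in\cP_k}f(W)\le f(V)+(k-1)\bigl(\opt-f(V)\bigr) = (k-1)\opt-(k-2)f(V)\le (k-1)\opt,
\]
with the last step using $f(V)\ge 0$. The analysis follows the template of Theorem~\ref{thm:greedysplitsym}, but without symmetry or monotonicity of $f$ the per-step bound weakens, which is why the factor degrades from $(2-2/k)$ to $(k-1)$.

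The key claim, which is the general-submodular analogue of Lemma~\ref{lem:main}, is that for every matroid-feasible $\cP_{i-1}$ with $i-1<k$ parts, there exists a matroid-feasible single-split refinement of marginal cost at most $\opt-f(V)$. I would prove this by fixing an optimal partition $\cP^*=\{V_1^*,\dots,V_k^*\}$ of value $\opt$ together with a basis $B^*\in\cB$ separated by $\cP^*$, and exhibiting the required split in the $V_j^*$-aligned form $X=W\cap V_j^*$, $W\setminus X$ for an appropriate pair $(W,j)$.

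The cost bound for any such $V_j^*$-aligned split is pure submodular arithmetic. Chaining the inequalities
\begin{align*}
f(W\cap V_j^*)+f(W\cup V_j^*)&\le f(W)+f(V_j^*),\\
f(W\setminus V_j^*)+f(W\cup(V\setminus V_j^*))&\le f(W)+f(V\setminus V_j^*),\\
f(W\cup V_j^*)+f(W\cup(V\setminus V_j^*))&\ge f(V)+f(W),
\end{align*}
where the third inequality uses that the two arguments have union $V$ and intersection $W$, and adding the first two while subtracting the third, yields $f(X)+f(W\setminus X)-f(W)\le f(V_j^*)+f(V\setminus V_j^*)-f(V)\le\opt-f(V)$. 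The last step is subadditivity on the disjoint union $V\setminus V_j^*=\bigsqcup_{\ell\ne j}V_\ell^*$, valid because $f(\emptyset)=0$.

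The main obstacle, and the place where the matroid constraint adds real work beyond the classical argument of \cite{greedysplit}, is producing a matroid-feasible $V_j^*$-aligned split at each iteration. I would handle this via a basis-exchange argument: because $|B^*|=k>i-1$, pigeonhole forces some $W\in\cP_{i-1}$ to contain two elements of $B^*$ from distinct optimal classes $V_{j_1}^*,V_{j_2}^*$, so the split $W\cap V_{j_1}^*,W\setminus V_{j_1}^*$ is non-trivial; a matroid exchange between $B^*$ and a witness basis $B_{i-1}$ of $\cP_{i-1}$ then produces a basis of $\cM$ that witnesses the refined partition. Carrying out this exchange while preserving feasibility on the remaining parts of $\cP_{i-1}$ is the delicate combinatorial step; once it is in place, greedy (which selects the cheapest matroid-feasible split) inherits the $\opt-f(V)$ bound at every iteration, and the telescoping inequality above completes the proof.
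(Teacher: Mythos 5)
Your proposal is correct and takes essentially the same route as the paper: the paper derives the $(k-1)$ bound by invoking Lemma~\ref{lem:main}, which is exactly the matroid-constrained per-step bound plus telescoping that you reconstruct, only stated with the slightly sharper bound $f(V_j^*)+f(V\setminus V_j^*)-f(V)$ for a \emph{distinct} $V_j^*$ at each step (that extra precision is needed only for the $(2-2/k)$ guarantees in Theorems~\ref{thm:greedysplitsym} and~\ref{thm:greedysplitmon}; your weaker $\opt-f(V)$ per-step bound suffices here). Your submodular arithmetic chain is correct and equivalent to the two uncrossings used in the paper's proof of the lemma.

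The part you leave as a sketch --- carrying out the exchange while preserving feasibility, with termination --- is precisely what Lemma~\ref{lem:main}'s Case~2 handles, and it is worth spelling out because your pigeonhole observation is a bit of a red herring: the part $W$ guaranteed by pigeonhole to contain two elements of $B^*$ need not be the part you can feasibly split. The argument that works starts instead from the exchange axiom: since $|B^*|>|B_{i-1}|$, there is $v\in B^*\setminus B_{i-1}$ with $B_{i-1}+v$ independent; let $U_h$ be the part of $\cP_{i-1}$ and $V_j^*$ the optimal class containing $v$, and let $u$ be the witness of $B_{i-1}$ in $U_h$. If $u\notin V_j^*$ then $U_h\cap V_j^*,\ U_h\setminus V_j^*$ is a feasible $V_j^*$-aligned split with witness $B_{i-1}+v$. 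If $u\in V_j^*$ then $u\ne v$ and $u\notin B^*$ (since $v$ is the unique element of $B^*$ in $V_j^*$), so replace $B_{i-1}$ by $B_{i-1}-u+v$, which is still independent and still a transversal of $\cP_{i-1}$; this strictly increases $|B_{i-1}\cap B^*|$, so the loop terminates either in the first case or with $B_{i-1}\subseteq B^*$, in which case $u\notin V_j^*$ automatically and we are again in the first case. Once this is written out, your telescoping argument is a valid and slightly more streamlined derivation of the $(k-1)$ factor.
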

	
	Finally, we consider the special case of graph coverage functions. We denote by \gcovmcp the special case of \submcp where the input function $f$ is a graph coverage function, defined as $f(S)\coloneqq \sum_{uv\in E \colon  \{u,v\}\cap S \neq \emptyset} w(uv)$ for every $S\subseteq V$. Note that this function is monotone, and only differs from graph cut functions in terms of approximation, as $f(\cP) = d_w(\cP) + |E|$ for any partition $\cP$, where $d_w(\cP)$ denotes the total weight of edges whose two endpoints are in diferent classes of $\cP$.
	In this case, we can improve upon the result for monotone functions to achieve a factor that is identical to the current best factor for the global case. The proof is short and simple, so we include it here.
	
	\begin{thm}\label{thm:gcov}
		The Gomory-Hu tree algorithm provides a $4/3$-approximation for \gcovmcp.
	\end{thm}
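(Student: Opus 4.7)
The plan is to combine the $(2-2/k)$-approximation for $D(\cP_{\alg})$ coming from Theorem~\ref{thm:symGH} applied to the symmetrization of $f$ with the trivial bound $D(\cP_{\alg}) \leq w(E)$, together with the subadditivity lower bound $f(\cP^*) \geq w(E)$, and then perform a short case analysis.

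First, I would observe two identities for graph coverage $f$. The symmetrization $g(S) := f(S) + f(V\setminus S) - f(V)$ equals the graph cut function $d_w$, so the Gomory-Hu tree built by the algorithm is the classical Gomory-Hu tree of $(G,w)$. Moreover, for every partition $\cP = \{V_1, \ldots, V_k\}$ of $V$, $\sum_{i=1}^k f(V_i) = w(E) + D(\cP)$, where $D(\cP)$ denotes the total weight of cut edges, since each internal edge is counted once and each cut edge twice. Hence the optimal partition for \gcovmcp coincides with the optimal matroid-constrained minimum $k$-cut partition in $(G,w)$, and Theorem~\ref{thm:symGH} applied to the symmetric submodular function $d_w$ yields $D(\cP_{\alg}) \leq (2-2/k) D^*$, where $D^* := D(\cP^*)$.

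Second, two further elementary bounds complete the picture: the trivial cut bound $D(\cP_{\alg}) \leq w(E)$ (the cut of any partition is at most the total edge weight), and the subadditivity bound $f(\cP^*) = \sum_i f(V_i^*) \geq f(V) = w(E)$, equivalently $D^* \leq w(E)$. Combining the two upper bounds on $D(\cP_{\alg})$ gives
\[
f(\cP_{\alg}) = w(E) + D(\cP_{\alg}) \leq w(E) + \min\bigl\{(2-2/k) D^*,\ w(E)\bigr\}.
\]

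Finally, a short case analysis, split on whether $(2-2/k) D^* \leq w(E)$ (equivalently $D^* \leq \tfrac{k}{2k-2}\,w(E)$), shows that $f(\cP_{\alg})/f(\cP^*)$ attains its maximum on the boundary $D^* = \tfrac{k}{2k-2}\, w(E)$, where $f(\cP_{\alg}) \leq 2\, w(E)$ and $f(\cP^*) = \tfrac{3k-2}{2k-2}\, w(E)$. This yields $f(\cP_{\alg})/f(\cP^*) \leq \tfrac{4k-4}{3k-2} \leq \tfrac{4}{3}$, and the only routine step is verifying the algebraic inequality $\tfrac{4k-4}{3k-2} \leq \tfrac{4}{3}$ for $k \geq 2$. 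The main conceptual step is recognizing that the $(2-2/k)$-bound on $D$ is insufficient on its own in the regime where $D^*$ is close to $w(E)$, and must be combined with the trivial cut bound to cap the algorithm's output at $2\, w(E)$.
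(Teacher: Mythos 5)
Your argument is correct and is essentially the paper's proof: both rewrite $f(\cP) = w(E) + d_w(\cP)$, apply Theorem~\ref{thm:symGH} to the cut function $d_w$, cap the algorithm's value by $2\,w(E)$ via the trivial cut bound, and then optimize the ratio over $D^*/w(E)$ --- the paper just parametrizes by $c = d_w(\cP^*)/|E|$ instead of doing an explicit case split, and relaxes $(2-2/k)$ to $2$ whereas you keep it and obtain the marginally sharper $\tfrac{4k-4}{3k-2}$. One small slip worth fixing: $f(\cP^*) = w(E) + D^* \geq w(E)$ is equivalent to $D^* \geq 0$, not to $D^* \leq w(E)$ (the latter is a separate true fact coming from the same trivial cut bound, not from subadditivity), but neither inequality is actually needed for your case analysis, so the argument stands.
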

	
	\begin{proof}
		Let $\cP$ be the partition returned by the Gomory-Hu tree algorithm 
		and $\cP^*$ be an optimal partition minimizing the graph coverage function $f$, which is also the partition minimizing $d_w$. Then, Theorem \ref{thm:symGH} shows $d_w(\cP) \leq (2-2/k) d_w(\cP^*) \leq 2 d_w(\cP^*)$. Let $c = \frac{d_w(\cP^*)}{|E|}$. Then
		\begin{align*}
			\frac{f(\cP)}{f(\cP^*)} \leq  \frac{2d_w(\cP^*)+|E|}{d_w(\cP^*)+|E|} =  \frac{2 d_w(\cP^*)+ d_w(\cP^*)/c}{d_w(\cP^*)+d_w(\cP^*)/c} = \frac{2c+1}{c+1}.
		\end{align*}
		On the other hand, $d_w(\cP^*)+|E| = (c+1)|E|$, and trivially $f(\cP) \leq 2|E|,$ so
		\begin{align*}
			\frac{f(\cP)}{f(\cP^*)} \leq  \frac{2|E|}{(c+1)|E|} =  \frac{2}{c+1}.
		\end{align*}
		Thus $\frac{f(\cP)}{f(\cP^*)} \leq \max_{0\leq c\leq 1}\min\left\{ \frac{2c+1}{c+1}, \frac{2}{c+1} \right\} = 4/3$, with the maximum attained at $c=1/2$.
	\end{proof}
	
	Our results are summarized in Table~\ref{tab:super_tab}.
	
	\begin{table}[!h]
		\centering
		\begin{tabular}{|c|c|c|c|}
			\hline
			&Matroid-Constrained & Global & Fixed Terminals  \\
			\hline
			General& $O(k)$-apx (Thm. \ref{thm:general}) & $O(k)$-apx \cite{submod_kapx, greedysplit} & $2$-apx \cite{chekuri-ene} \\ 
			\hline
			Symmetric& $(2-2/k)$-apx (Thms. \ref{thm:symGH}, \ref{thm:greedysplitsym}) & $(2-2/k)$-apx \cite{greedysplit} & $(3/2)$-apx \cite{chekuri-ene} \\ 
			\hline
			Monotone& $(2-2/k)$-apx (Thm. \ref{thm:greedysplitmon}) & $(4/3)$-apx \cite{ppskarthikwang} & $(4/3)$-apx \cite{fixedterms} \\ 
			\hline
			Graph Coverage& $(4/3)$-apx (Thm. \ref{thm:gcov}) & $(4/3)$-apx \cite{ppskarthikwang} & $(9/8)$-apx \cite{fixedterms} \\ 
			\hline
		\end{tabular}
		\caption{Comparison of our approximation results (first column) with the current best for global and fixed-terminal settings across different classes of submodular functions.}
		\label{tab:super_tab}
	\end{table}
	
	\section{Preliminaries}
	
	\paragraph{Notation.}
	
	We denote by $\bR_+$ and $\bZ_+$ the sets of nonnegative reals and integers, respectively. For a positive integer $k$, we use $[k]\coloneqq \{1,\dots,k\}$. For a set function $f:2^V\to \bR_+$ and a partition $\cP$ of $V$, we use the notation $f(\cP) \coloneqq \sum_{X\in \cP}f(X)$. For a graph $G=(V,E)$ and $X\subseteq V$, let $\delta_G(X)$ denote the set of edges leaving $X$, that is, $\delta_G(X) \coloneqq \{uv\in E\colon |\{u,v\}\cap X|=1\}$. The cut function of $G$ is then defined as $d_G(X) \coloneqq |\delta_G(X)|$ for $X\subseteq V$. We omit the subscript $G$ when the graph is clear from the context. If edge weights $w\in\bR^E_+$ are given, we use $d_w(X) \coloneqq \sum_{e\in \delta(X)} w(e)$. In directed graphs, we use $\delta^{in}, d^{in}$, and $\delta^{out}, d^{out}$ for incoming and outgoing edges, respectively. For a graph $G=(V,E)$ and a partition $\cP=\{V_1,\dots,V_k\}$ of $V$, $\delta_G(\cP) \coloneqq \cup_{i=1}^k \delta_G(V_i)$ is called the \emph{boundary} of $\cP$. If edge weights $w\in\bR^E_+$ are also given, we use $d_w(\cP)=w(\delta_G(\cP))$.
	
	For a subset $F\subseteq E$, we denote the graph obtained by deleting the edges in $F$ by $G-F$. When inserting (deleting) a single element $e$ from a set $X$, we often use $X+e$ $(X-e)$ for $X\cup \{e\}$ $(X\setminus \{e\})$.
	
	\paragraph{Matroids.} A \emph{matroid} $\cM=(V,\cI)$ is defined by its \emph{ground set} $V$ and its \emph{family of independent sets} $\cI\subseteq 2^V$ that satisfies the \emph{independence axioms}: (I1) $\emptyset\in\cI$, (I2) $X\subseteq Y,\ Y\in\cI\Rightarrow X\in\cI$, and (I3) $X,Y\in\cI,\ |X|<|Y|\Rightarrow\exists e\in Y-X\ s.t.\ X+e\in\cI$. The subsets of $S$ not in $\cI$ are called \emph{dependent}. The \emph{rank} $r(X)$ of a set $X$ is the maximum size of an independent set in $X$. The \emph{rank} of the matroid is $r(V)$. The maximal independent subsets of $V$ are called \emph{bases} and their family is usually denoted by $\cB$. The matroid is uniquely defined by its rank function or by its family of bases, so we also use the notation $\cM=(V,r)$ or $\cM=(V,\cB)$.  An inclusionwise minimal non-independent set is a \emph{circuit}. 
	
	Given a matroid $\cM=(V,\cI)$, its \emph{$k$-truncation} is the matroid $(V,\cI_k)$ with $\cI_k=\{X\in\cI\mid |X|\leq k\}$. We denote the $k$-truncation of $\cM$ by $(\cM)_{k}$. For a set $Z\subseteq V$, the \emph{contraction of $Z$} results in a matroid $\cM/Z=(V \setminus Z,\cI')$ in which a set $X\subseteq V \setminus Z$ is independent if and only if $X\cup I$ is independent for any maximum independent set of $\cM$ in $Z$. Let $D=(V,A)$ be a directed graph, $V_1,V_2\subseteq V$, and $M_1=(V_1,\cI_1)$ be a matroid. We say that a set $Y\subseteq V_2$ is \emph{linked} to another set $X\subseteq V_1$ if $|X|=|Y|$ and there exists $|X|$ pairwise vertex-disjoint directed paths from $X$ to $Y$. Then the matroid $M_2=(V_2,\cI_2)$ defined by $\cI_2\coloneqq\{I_2\subseteq V_2\colon \text{$I_2$ is linked to $I_1$ for some $I_1\in\cI_1$} \}$ is called the \emph{matroid of $M_1$ induced by $D$ on $V_2$}.
	
	A \emph{laminar matroid} is a matroid $\cM=(V,\cI)$ where $\cI=\{F\subseteq V\colon |F\cap V_i|\leq g_i\ \text{for each $i\in [q]$}\}$ for some laminar family $V_1,\dots, V_q$ of subsets of $V$ and upper bounds $g_1,\dots,g_q\in\bZ_+$. A rank-$r$ matroid $\cM=(V,\cI)$ is \emph{paving} if every set of size at most $r-1$ is independent, or equivalently, if every circuit has size at least $r$. Such matroids have a nice characterization using hypergraphs~\cite{hartmanis1959lattice,welsh1976matroid}.
	
	\begin{prop}
		For a non-negative integer $r$, a ground set $V$ of size at least $r$, and a (possibly empty) family $\mathcal{H}=\{H_1,\dots,H_q\}$ of proper subsets of $V$ such that $|H_i\cap H_j|\leq r-2$ for $1\leq i < j\leq q$, the family $\mathcal{B}_{\mathcal{H}}=\{X\subseteq V\mid |X|=r,\ X\not\subseteq H_i\ \text{for } i=1,\dots,q\}$ forms the set of bases of a paving matroid. Furthermore, every paving matroid can be obtained in this way.
	\end{prop}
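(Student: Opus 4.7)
The plan is to verify both directions of the characterization separately. For the forward direction, given a family $\mathcal{H}$ satisfying the hypothesis, I would define $\mathcal{I}_\mathcal{H}\coloneqq \{X\subseteq V\colon |X|\leq r-1\}\cup\mathcal{B}_\mathcal{H}$ and check the three independence axioms. Axioms (I1) and (I2) are immediate from the definition. The paving property is also immediate, since every subset of size at most $r-1$ is placed in $\mathcal{I}_\mathcal{H}$ by definition; the only nontrivial task is to show that $\mathcal{B}_\mathcal{H}$ is nonempty, which follows from the exchange argument once it is in place (starting from $\emptyset$).

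The core of the argument is verifying the exchange axiom (I3) for $I,J\in\mathcal{I}_\mathcal{H}$ with $|I|<|J|$. The only delicate case is $|I|=r-1$, forcing $|J|=r$ and $J\in\mathcal{B}_\mathcal{H}$. Assume for contradiction that $I+e\notin\mathcal{I}_\mathcal{H}$ for every $e\in J\setminus I$, so for each such $e$ there is an index $i(e)$ with $I+e\subseteq H_{i(e)}$. I then split into two subcases. If there exist $e_1\neq e_2$ in $J\setminus I$ with $i(e_1)\neq i(e_2)$, then $I\subseteq H_{i(e_1)}\cap H_{i(e_2)}$, contradicting $|I|=r-1>r-2\geq |H_{i(e_1)}\cap H_{i(e_2)}|$. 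Otherwise all $i(e)$ agree on a single index $i$, and then $J=I\cup(J\setminus I)\subseteq H_i$, contradicting $J\in\mathcal{B}_\mathcal{H}$. This is the step I expect to be the main obstacle, since it is the only place where the hypothesis $|H_i\cap H_j|\leq r-2$ is used crucially.

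For the reverse direction, given a paving matroid $\mathcal{M}=(V,\mathcal{I})$ of rank $r$, I would take $\mathcal{H}$ to be the collection of hyperplanes (flats of rank $r-1$) of cardinality at least $r$. Each such $H_i$ is properly contained in $V$ since $r(H_i)=r-1<r=r(V)$. To verify $|H_i\cap H_j|\leq r-2$ for distinct $H_i,H_j$, I first show $r(H_i\cap H_j)\leq r-2$: if $r(H_i\cap H_j)=r-1$ were achieved by some independent $I\subseteq H_i\cap H_j$ of size $r-1$, then the closure $\mathrm{cl}(I)$ would be a rank-$(r-1)$ flat contained in both $H_i$ and $H_j$, and maximality of hyperplanes would force $\mathrm{cl}(I)=H_i=H_j$. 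Then the paving property gives $|H_i\cap H_j|\leq r-1$ (a size-$r$ set has rank at least $r-1$), and once $|H_i\cap H_j|\leq r-1$ the paving property yields $|H_i\cap H_j|=r(H_i\cap H_j)\leq r-2$.

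Finally, I would verify $\mathcal{B}_\mathcal{H}$ coincides with the bases of $\mathcal{M}$: an $r$-subset $B$ is independent iff $r(B)=r$, iff $B$ is not contained in any flat of rank $r-1$, iff $B$ is not contained in any hyperplane in $\mathcal{H}$ (hyperplanes of size less than $r$ cannot contain an $r$-set, so restricting to $\mathcal{H}$ as defined is harmless). This completes both directions and establishes the claimed equivalence.
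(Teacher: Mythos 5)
The paper states this proposition without a proof, citing Hartmanis and Welsh, so there is no in-paper argument to compare yours against; I evaluate your proof on its own terms.

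Your argument is essentially correct. In the forward direction, the case analysis for the exchange axiom when $|I|=r-1$ and $J\in\mathcal{B}_{\mathcal{H}}$ is exactly where the hypothesis $|H_i\cap H_j|\le r-2$ bites, and you use it correctly; the cases with $|I|\le r-2$ are indeed trivial because $I+e$ then has size at most $r-1$. In the reverse direction, taking $\mathcal{H}$ to be the hyperplanes of cardinality at least $r$ is the standard construction, and both your derivation of $|H_i\cap H_j|\le r-2$ (first bounding the rank, then using the paving property twice) and your identification of $\mathcal{B}_{\mathcal{H}}$ with the bases of $\mathcal{M}$ are sound.

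One step is handled too casually: you assert that nonemptiness of $\mathcal{B}_{\mathcal{H}}$ ``follows from the exchange argument once it is in place (starting from $\emptyset$),'' but this is circular as stated. The exchange axiom only grows an independent set toward an already-known \emph{larger} independent set, so to reach size $r$ you would need some member of $\mathcal{B}_{\mathcal{H}}$ in hand. Note also that your verification of (I3) never uses the hypothesis that each $H_i$ is a \emph{proper} subset of $V$, whereas nonemptiness genuinely depends on it. The repair is short (for $r\ge 1$): fix any $(r-1)$-subset $I\subseteq V$. By the intersection bound, at most one $H_i$ contains $I$, since $H_i\supseteq I$ and $H_j\supseteq I$ with $i\neq j$ would force $|H_i\cap H_j|\ge r-1$. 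If no $H_i$ contains $I$, pick any $v\in V\setminus I$; then $I+v$ lies in no $H_i$, so $I+v\in\mathcal{B}_{\mathcal{H}}$. If exactly one $H_i\supseteq I$, pick $v\in V\setminus H_i$ (possible since $H_i\subsetneq V$); then $v\notin I$ and $I+v$ lies in no $H_j$, so again $I+v\in\mathcal{B}_{\mathcal{H}}$. (For $r=0$ the proposition as written actually fails when $q=1$, since then $\emptyset\subseteq H_1$ forces $\mathcal{B}_{\mathcal{H}}=\emptyset$; this degenerate case is clearly not the intended regime.)
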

	
	As is common in matroid algorithms, we assume that matroids are given by an independence oracle, and we measure the running time in terms of the number of oracle calls and other elementary operations. For simplicity, we refer to an algorithm as ``polynomial time'' if its number of oracle calls and elementary operations is polynomial in the size of the ground set.

	\section{Symmetric Submodular Functions}
	
	We prove Theorem~\ref{thm:symGH} in this section. The proof builds on a method outlined in \cite{prev_ppr}. First, we show that \mcmcut is solvable on trees by a type of greedy algorithm. Then we use this on the Gomory-Hu tree of a symmetric submodular function to obtain an approximation algorithm for \symsubmcp.
	
	\subsection{\texorpdfstring{MC-Multiway-Cut}{\mcmcut} on Trees}
	
	As a first step, we show that \mcmcut is solvable in polynomial time on trees. Interestingly, in this case the problem reduces to finding a minimum weight basis of a matroid. Recall that in this setting, we are given a tree $G=(V,E)$ together with edge weights $w\in\bR^E_+$ and a matroid $\cM=(V,\cI)$. Let us define the following family of subsets of $E$:
	\begin{align} \label{eq:good}
		\cI' = \{ X'\subseteq E \colon \text{the components $V_1,\dots,V_{|X'|+1}$ of $G-X'$ have a transversal independent in $\cM$}\}.
	\end{align}
	In other words, an edge set $X'$ is in $\cI'$ if there exists $v_i\in V_i$ for every connected component $V_i$ of $G-X'$ such that $\{v_1,\dots,v_{|X'|+1}\}$ forms an independent set of $\cM$.
	
	\begin{lem}\label{lem:tree_matroid}
		If $\cM$ has rank at least $1$, then $\cI'$ satisfies the independence axioms.
	\end{lem}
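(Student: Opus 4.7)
The plan is to realize $\cI'$ as the contraction of a larger matroid produced by the directed-graph matroid induction recalled in the Preliminaries. First, I would root the tree $G$ at an arbitrary vertex $r$ and orient every edge from child to parent, obtaining a digraph $D=(V,A)$; a directed path from $w$ to $v$ in $D$ exists exactly when $v$ is an ancestor of $w$ (with $v=w$ allowed). The map sending each non-root vertex to its parent-edge in $E$ extends, by sending $r$ to a formal ``root edge'' $e_r$, to a bijection $\beta\colon V\to E\cup\{e_r\}$. For $X'\subseteq E$, set $S\coloneqq\beta^{-1}(X'\cup\{e_r\})\subseteq V$; the components of $G-X'$ are in one-to-one correspondence with $S$, where the component $C_v$ corresponding to $v\in S$ consists of all $w\in V$ whose nearest ancestor lying in $S$ (counting $w$ itself) is $v$.

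The heart of the argument is the following reformulation: $X'\in\cI'$ if and only if $S$ is linked in $D$ to some independent set $I\in\cI_{\cM}$ of the same size. For the ``only if'' direction, take an independent transversal $\phi\colon S\to V$ with $\phi(v)\in C_v$ and lift each pair to the unique directed path from $\phi(v)$ up to $v$ in $D$. If two such paths (say for $v,v'\in S$ with $v$ a strict ancestor of $v'$) shared a vertex, then $v'$ would be an ancestor of $\phi(v)$ strictly closer than $v$, contradicting $\phi(v)\in C_v$. For the converse, read $\phi$ off the path endpoints: a strictly nearer ancestor $v'\in S$ of some $\phi(v)$ would force the path from $\phi(v)$ to $v$ to traverse $v'$, colliding with the path terminating at $v'$; hence $\phi(v)\in C_v$ holds automatically for all $v\in S$.

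With this equivalence, apply the matroid induction theorem (Preliminaries) to $\cM$, $D$, and $V_2=V$: the family $\widehat\cI\coloneqq\{T\subseteq V\colon T\text{ is linked to some }I\in\cI_{\cM}\}$ is the independent-set family of a matroid $\widehat\cM$ on $V$. Transporting $\widehat\cM$ along $\beta$ yields a matroid on $E\cup\{e_r\}$, in which, by the preceding paragraph, $X'\cup\{e_r\}$ is independent precisely when $X'\in\cI'$. Since $r$ is reachable in $D$ from every vertex and $\cM$ has some non-loop element (by the rank-at-least-one hypothesis), $e_r$ is not a loop of the transported matroid, so its contraction is well-defined and produces a matroid on $E$ whose independent sets are exactly $\cI'$.

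The main obstacle is the structural equivalence between partition transversals and vertex-disjoint-path systems, in which the tree structure is genuinely used: the ``nearest ancestor in $S$'' description of $C_v$ must be matched up carefully with vertex-disjointness of directed paths in $D$, and both directions depend on tracing unique tree paths. Everything else is a direct invocation of the matroid-induction construction and a single contraction, both recalled in the Preliminaries.
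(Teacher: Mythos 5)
Your proof is correct, and it takes a genuinely different route from the paper's \emph{official} proof of Lemma~\ref{lem:tree_matroid}, which verifies the three independence axioms directly: (I1) is trivial, (I2) follows from the coarsening relation between components of $G-X'$ and $G-Y'$, and (I3) is an exchange argument that picks $X\in\cI$ and $Y\in\cI$ with $|X\cap Y|$ maximum, applies the matroid exchange axiom to get $u\in Y\setminus X$ with $X+u\in\cI$, and then locates an edge of $Y'$ on the $u$--$v$ tree path to add to $X'$. Your approach instead realizes $\cI'$ as the result of a matroid-induction-by-a-digraph followed by a contraction, using the parent-edge bijection $\beta\colon V\to E\cup\{e_r\}$ to identify edges of $G$ with non-root vertices of the in-arborescence $D$. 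This is essentially the construction the paper itself sketches in the Remark immediately after the lemma (there the authors subdivide each arc and attach a fresh apex $x$ to $r$; you avoid the subdivision by transporting along $\beta$, which is a cosmetic difference). The core of your argument — that $X'\in\cI'$ iff $S=\beta^{-1}(X'\cup\{e_r\})$ is linked to an independent set of $\cM$ via vertex-disjoint paths in $D$, established by the ``nearest ancestor in $S$'' description of the components and the uniqueness of tree paths — is the substantive content of that remark, which the paper states without proof, so you have filled in a real detail. The trade-off between the two approaches: the paper's axiom check is fully self-contained, while yours is more conceptual and immediately gives the oracle-construction for $\cM'$, but it imports the nontrivial theorem that matroid induction along a digraph preserves matroidness. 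Both are valid; the paper presumably chose the elementary verification as the official proof to keep the dependency footprint small. You do correctly handle the hypothesis that $\cM$ has rank at least $1$: it ensures $e_r$ is not a loop of the transported matroid, so the contraction is well-defined.
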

	\begin{proof}
		We verify the axioms one by one. Axiom (I1) clearly holds, since the rank of $\cM$ is at least one, and any non-loop $v\in V$ is a transversal.
		
		To see (I2), let $Y'\in\cI'$ and $X'\subseteq Y'$. Let $\mathcal{V}$ and $\mathcal{U}$ be the families of connected components of $G-Y'$ and $G-X'$, respectively. Since $X'\subseteq Y'$, the partition $\mathcal{U}$ forms a coarsening of $\mathcal{V}$. By the definition of $\cI'$, $Y' \in \cI'$ implies that there exists a transversal $Y$ of $\mathcal{V}$ that is independent in $\cM$. Since $\mathcal{U}$ is a coarsening of $\mathcal{V}$, a properly chosen subset $X$ of $Y$ will form a transversal of $\mathcal{U}$, thus showing that $X' \in \cI'$.
		
		Finally, we verify (I3). Take any $X', Y'\in \cI'$ with $|X'|<|Y'|$. Let $X,Y \in \cI$ be the corresponding independent sets in $\cM$ chosen such that $|X\cap Y|$ is as large as possible. By axiom (I3) for $\cI$, there exists $u\in Y\setminus X$ such that $X+u\in \cI$. Let $C$ be the component of $G-X'$ containing $u$, and $v$ be the element of $X$ in $C$. If $v\notin Y$, then $X-v+u\in\cI$ is still a transversal of the conencted somponents of $G-X'$, contradicting the choice of $X$ and $Y$. Thus $v\in Y$. As $u,v\in Y$, there exists an edge $e\in Y'$ that lies on the path between $u$ and $v$ in $G$. Then $X'+e$ is in $\cI'$ as desired, as it has $X+u$ as an independent set of representatives.
	\end{proof}
	
	Given the independence oracle access to $ \cM $, we can construct an independence oracle to the matroid $ \cM'=(E,\cI') $, as deciding whether there is an independent set of representatives for a given partition of $ V $ reduces to the intersection of a partition matroid and $ \cM $.

	\begin{remark}
		It is worth mentioning that $\cI'$ can be identified with the independent set of a matroid of $\cM$ induced by a directed graph. To see this, pick an arbitrary vertex $r\in V$, and orient the edges of $G$ such that the resulting directed graph is an in-arborescence rooted at $r$. Then, subdivide every arc $uv$ with a new vertex $x_{uv}$, resulting in arcs $ux_{uv}$ and $x_{uv}v$. Finally, add a new vertex $x$ to the graph together with the arc $rx$. Note that the vertex set of the directed graph thus obtained is $V\cup\{x_{uv}\colon uv\in E\}\cup\{x\}$. Now consider the matroid obtained by taking the matroid of $\cM$ induced by the directed graph on $\{x_{uv}\colon uv\in E\}\cup\{x\}$, and contracting $x$ in it. It is not difficult to check that $Y\subseteq \{x_{uv}\colon uv\in E\}$ is independent in this matroid if and only if the corresponding subset of edges is in $\cI'$.
	\end{remark}
	
	\begin{cor}\label{lem:trees}
		\mcmcut is solvable in polynomial time on trees.
	\end{cor}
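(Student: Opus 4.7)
The plan is to reduce \mcmcut on a tree to the problem of finding a minimum-weight basis of the matroid $\cM' = (E, \cI')$ established in Lemma~\ref{lem:tree_matroid}, and then invoke the standard greedy algorithm for matroid optimization. First I would exploit the bijection between partitions $\cP = \{V_1, \ldots, V_k\}$ of $V$ and edge subsets $F \subseteq E$ with $|F| = k-1$, given by $F = \delta_G(\cP)$: since $G$ is a tree, removing $k-1$ edges produces exactly $k$ components, which are precisely $V_1, \ldots, V_k$. Under this bijection, the matroid-separation requirement that some $B \in \cB$ satisfies $|B\cap V_i|=1$ for every $i$ is exactly the defining condition for $F \in \cI'$, and the objective becomes $\sum_{i=1}^k d_w(V_i) = 2\, w(F)$, since each edge of $F$ has its two endpoints in distinct components of $G-F$ and is therefore counted in exactly two terms.

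Next I would verify that $\cM'$ has rank exactly $k-1$. The upper bound is immediate: any $X' \in \cI'$ admits an independent transversal of size $|X'|+1$ in $\cM$, so $|X'|+1 \leq r(\cM) = k$. For the lower bound, given any basis $\{v_1, \ldots, v_k\}$ of $\cM$, a straightforward induction on $k$, cutting along any edge on the path between two designated vertices and recursing on the two resulting subtrees, produces $k-1$ edges of $G$ whose removal places each $v_i$ in a distinct component. Consequently every feasible solution of \mcmcut corresponds to a basis of $\cM'$, so the problem reduces to finding a minimum-weight basis of $\cM'$ with respect to $w$.

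This minimum-weight basis problem is solved in polynomial time by the greedy algorithm, provided we have an independence oracle for $\cM'$. As noted in the paragraph following Lemma~\ref{lem:tree_matroid}, such an oracle is available: testing $F \in \cI'$ amounts to deciding whether the connected components of $G-F$ admit an independent transversal in $\cM$, which is an instance of matroid intersection between $\cM$ and the partition matroid induced by the components and is solvable in polynomial time. I do not foresee a serious obstacle here; the only step requiring mild care is the rank computation, which rests on an elementary structural observation about trees, and the overall proof is essentially a chain of routine equivalences.
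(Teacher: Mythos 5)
Your proposal takes essentially the same route as the paper --- reduce \mcmcut on a tree to finding a minimum-weight basis of the matroid $\cM'=(E,\cI')$ from Lemma~\ref{lem:tree_matroid}, then apply the matroid greedy algorithm --- but there is a gap in the claimed ``bijection'' between $k$-partitions of $V$ and $(k-1)$-element edge subsets. The map sending a set $F\subseteq E$ with $|F|=k-1$ to the components of $G-F$ is injective, but its image consists only of partitions all of whose parts induce \emph{connected} subtrees. A general feasible partition $\cP$ may have a disconnected part, in which case $|\delta_G(\cP)|>k-1$ and $\cP$ corresponds to no $(k-1)$-edge subset at all, even though its cost is still $2w(\delta_G(\cP))$. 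As written, your reduction therefore fails to account for such feasible solutions, and in particular does not bound their cost from below by twice the minimum-weight basis.

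The paper closes exactly this gap with a short preliminary observation: there is always an optimal partition with connected parts. If $G[V_i]$ is disconnected and $U$ is a component of $G[V_i]$ not containing the chosen representative $B\cap V_i$, then reassigning $U$ to an adjacent part yields another feasible partition of no greater cost but with strictly fewer boundary edges; an optimum chosen to minimize $|\delta_G(\cP)|$ thus has all parts connected. (An equivalent fix is to show directly that $\delta_G(\cP)$ always contains a basis of $\cM'$, which with nonnegative weights gives the same lower bound.) With that observation added, the remainder of your proposal --- that $\cM'$ has rank $k-1$, that feasible partitions with connected parts correspond to bases with objective $2w(F)$, that an independence oracle for $\cM'$ reduces to matroid intersection, and that the greedy algorithm then solves the problem in polynomial time --- is correct and matches the paper's argument.
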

	\begin{proof}
		First, we observe that \mcmcut has an optimal solution $\cP=\{V_1,\dots, V_k\}$ where $G[V_i]$ is connected for every $i\in [k]$. To show this, we choose $\cP$ so that $|\delta_G(\cP)|$ is smallest, and choose a basis $B$ with $|B \cap V_i|=1$ for all $i \in [k]$. Suppose for contradiction that some $G[V_i]$ is not connected, and let $U$ be a component that does not contain $B\cap V_i$. Since $G$ is connected, there exists an edge from $U$ to some $V_j$. The partition $\cP'=\cP\setminus \{V_i,V_j\}\cup \{V_j\cup U, V_i\setminus U\}$ is also optimal and $|\delta_G(\cP')| < |\delta_G(\cP)|$, a contradiction.
		
		By the above observation and Lemma~\ref{lem:tree_matroid}, the minimum weight matroid-constrained multiway cut on trees corresponds to the minimum weight basis of the matroid $ \cM'=(E,\cI') $. Thus, \mcmcut can be solved in polynomial time by using the greedy algorithm to find a minimum weight basis of $\cM'$.
	\end{proof}
	
	\subsection{Symmetric Submodular Functions}
	\kanote{Title is confusing. Section is about Symmetric Submodular Functions and not General Submodular Functions.} \dnote{Good point, fixed}

	Goemans and Ramakrishnan~\cite{ghtree1} were the first to note that symmetric submodular systems have a Gomory-Hu tree.
	Given a symmetric submodular function $f$ over a ground set $V$, its Gomory-Hu tree is a tree $H=(V,F)$ with a weight function $w_H\in\bR^F_+$ that encodes the minimum $s-t$ cuts for each pair $s,t$ of vertices in the following sense: given a path $P_{s,t}$ between $s$ and $t$, $\min_{e\in P_{s,t}}w_H(e) = \min_{s\in S\subseteq V- t} f(S)$.
	Furthermore, the two components of the tree obtained by removing the edge of minimum $w_H$-weight on the path give the two sides of a minimum $s-t$ cut in the submodular system $f$.
	
	The greedy algorithm solves \mcmcut in the special case when $ G $ is a tree. The classical $ (2-2/k) $-approximation for \mcut uses 2-way cuts coming from the Gomory-Hu tree, and so does the $ (2-2/k) $-approximation for \kcut. We follow a similar approach to obtain a $ (2-2/k) $-approximation for \symsubmcp, presented in Algorithm~\ref{alg:gh_greedy}. The algorithm can be interpreted as taking the minimum edges in the Gomory-Hu tree as long as they allow a valid system of representatives.
	
	\begin{algorithm}[!h]
		\caption{Approximation algorithm for \symsubmcp}\label{alg:gh_greedy}
		\begin{algorithmic}[1]
			\Statex \textbf{Input:} A symmetric, submodular function $f\colon 2^V\to \bR_+$ and a rank-$k$ matroid $ \cM $.
			\Statex \textbf{Output:} A feasible partition $\cP$. 
			\State Compute the Gomory-Hu tree $ H=(V,E) $ of $ f $ and its weight function $w_H$.
			\State Let $\cM'=(E,\cI')$ be the matroid defined as in \eqref{eq:good} using $\cM$ and $H$.
			\State Set $C \gets \emptyset$.
			\While{$|C|< k-1$}
			\State $e\gets \argmin\{ w_H(e) \colon  e\notin C, C+e\in \cI' \}$
			\State $C \gets C+e$
			\EndWhile
			\State Return the connected components of $(V,E \setminus  C)$.
		\end{algorithmic}
	\end{algorithm}
	
	\vspace{-.1cm}
	\begin{thm}\label{thm:sym_greedy}
		Algorithm~\ref{alg:gh_greedy} provides a $ (2-2/k) $-approximation to \symsubmcp.
	\end{thm}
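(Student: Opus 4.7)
The plan is to analyze Algorithm~\ref{alg:gh_greedy} in two stages: a submodularity--plus--Gomory--Hu argument bounding the partition cost by twice the total weight $w_H(C)$ of the edges the algorithm selects, together with an exchange-style argument exhibiting a basis $C^*$ of $\cM'$ of weight at most $(1-1/k)\,\opt$.

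For the first stage, the matroid $\cM'$ from Lemma~\ref{lem:tree_matroid} together with the description of the loop shows that the selected edge set $C$ is the output of Edmonds' greedy algorithm on $\cM'$, hence a minimum-weight basis. Let $U_1,\dots,U_k$ be the components of $H-C$. Each $U_i$ is a subtree of $H$ whose $H$-boundary is contained in $C$, and one can write $U_i=\bigcap_{e\in\delta_H(U_i)}S_e$, where $S_e$ is the side of $H-e$ containing $U_i$. The Gomory--Hu property together with symmetry of $f$ yields $f(S_e)=w_H(e)$. Iterating the standard inequality $f(A\cap B)\le f(A)+f(B)$, valid for non-negative submodular $f$, gives $f(U_i)\le \sum_{e\in\delta_H(U_i)}w_H(e)$; summing over $i$ and counting each edge of $C$ in exactly two such boundaries yields $\sum_{i=1}^k f(U_i)\le 2\,w_H(C)$.

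For the second stage, I construct a comparison basis $C^*$ of $\cM'$ as follows. Fix an optimal partition $\cP^*=\{V_1^*,\dots,V_k^*\}$ with a separating basis $B^*=\{v_1^*,\dots,v_k^*\}$, $v_i^*\in V_i^*$, and reindex so that $f(V_k^*)$ is largest. Root $H$ at $v_k^*$. For each $i<k$, let $e_i^*$ be the first edge on the path from $v_i^*$ upward to $v_k^*$ whose upper endpoint lies outside $V_i^*$; this exists since $v_k^*\notin V_i^*$. Put $C^*\coloneqq\{e_1^*,\dots,e_{k-1}^*\}$. By construction the lower endpoint of $e_i^*$ lies in $V_i^*$, so the edges $e_i^*$ have pairwise distinct lower endpoints and are pairwise distinct. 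Since $V_i^*$ separates the two endpoints of $e_i^*$, the Gomory--Hu identity yields $w_H(e_i^*)\le f(V_i^*)$, whence $w_H(C^*)\le \sum_{i<k}f(V_i^*)=\opt-f(V_k^*)\le (1-1/k)\,\opt$ by the choice of $v_k^*$.

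The main obstacle is verifying that $C^*$ is a basis of $\cM'$, i.e., that the $k$ components of $H-C^*$ each contain exactly one vertex of $B^*$ so that $B^*$ serves as the required independent transversal. It suffices to show that $v_i^*,v_j^*$ land in different components for every $i\ne j$. When $j=k$ this is immediate, as $e_i^*$ sits on the $v_i^*$--$v_k^*$ path. For $i,j<k$, let $\ell$ be the lowest common ancestor of $v_i^*$ and $v_j^*$ in the rooted tree. If the upper endpoint of $e_i^*$ is at or below $\ell$, then $e_i^*$ already lies on the unique $v_i^*$--$v_j^*$ path and cuts it; otherwise the upper endpoint of $e_i^*$ is strictly above $\ell$, and the defining property of $e_i^*$ forces every vertex on the $v_i^*$-to-(lower endpoint of $e_i^*$) subpath, and in particular $\ell$, to lie in $V_i^*$. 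But then $\ell\notin V_j^*$, so the symmetric argument for $j$ places the upper endpoint of $e_j^*$ at or below $\ell$, and $e_j^*$ separates $v_i^*$ from $v_j^*$. Combining the two stages gives $\sum_i f(U_i)\le 2\,w_H(C)\le 2\,w_H(C^*)\le (2-2/k)\,\opt$, as claimed.
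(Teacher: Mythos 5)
Your proof is correct and follows the same overall strategy as the paper's: both show that the greedy loop returns a minimum-weight basis $C$ of the matroid $\cM'$, bound $f(\alg)\le 2\,w_H(C)$ by iterated submodularity over the Gomory--Hu tree cuts, and then exhibit a competing basis of weight at most $\sum_{i<k}f(V_i^*)\le(1-1/k)\,\opt$ after placing the most expensive class $V_k^*$ last. Where you genuinely diverge is in how that competing cut set on $H$ is built. The paper iteratively removes, from the current forest, the minimum-weight edge separating two same-component representatives, records the separated pair $f_j=(t_i^*,t_{i'}^*)$, forms a ``tree of pairs'' $F$ on the representatives, directs $F$ away from $t_k^*$, and reindexes so that $e^i$ is the minimum-weight $H$-edge on the path of the unique $F$-arc entering $t_i^*$; the charge $w_H(e^i)\le f(V_i^*)$ then uses the full Gomory--Hu min-over-path property. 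You instead root $H$ at $v_k^*$, take for each $i<k$ the first edge on the upward path out of $v_i^*$ whose upper endpoint leaves $V_i^*$, and verify separation of all $v_i^*$'s by an LCA case analysis; the charge $w_H(e_i^*)\le f(V_i^*)$ then only needs the single-edge Gomory--Hu identity, since $V_i^*$ splits the two endpoints of an edge of $H$. Your version is somewhat cleaner: the competing edges are defined explicitly and charged locally, and the LCA argument replaces the paper's bookkeeping with the auxiliary tree $F$ and its reindexing. One small point worth flagging: to conclude $w_H(C)\le w_H(C^*)$ via greedy it suffices that $C^*\in\cI'$ with $|C^*|=k-1=r(\cM')$, which is exactly what your distinct-lower-endpoints observation together with the separation argument establishes, so the ``main obstacle'' you identify is indeed the crux and you handle it correctly.
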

	\begin{proof}
		Let $ \opt =\{V_1^*,\dots,V_k^* \}$ be an optimal partition, and let $t_i^*\in V_i^*$ for $i\in[k]$ be representatives such that $\{t_1^*,\dots, t_k^*\} \in \cB $. Without loss of generality, we may assume that $ V_k^* $ has the maximum $f$ value among the partition classes. Let $ H=(V,E) $ be the Gomory-Hu tree of $ f $.
		
		We transform $\opt $ into a solution $\opt_{GH}$ on $H$, losing at most a factor of $(2-2/k)$.
		We do this by repeatedly removing the minimum weight edge in $E$ that separates a pair among the representatives $t_1^*,\dots, t_k^*$ that are in the same component of $H$. More precisely, we start with $H_0=H$, and take the minimum-weight edge $ e_1\in E(H_0) $ separating some pair of representatives $t_{i}^*, t_{j}^*$ that are in the same component of $ H_0 $. Define the edge $f_1=(t_{i}^*, t_{j}^*) $. Then we construct $ H_1=H_0-e_1 $, and repeat this process to get a sequence of edges $ e_1,\dots, e_{k-1} $ and a tree of representative pairs $ F=(\{ t_1^*,\dots, t_k^* \},\{f_1,\dots, f_{k-1}\}) $.
		
		Fix a vertex $x$ arbitrarily. For an edge $e\in E$, let $U(e)$ denote the vertices of the connected component of $H-e$ containing $x$. Direct the edges of $ F $ away from $ t_k^* $, and reorder the edges such that $ f^1 $ is the edge going into $ t_1^* $, $ f^2 $ into $ t_2^* $, and so on. Let $ e^i $ be the edge of the Gomory-Hu tree corresponding to $ f^i $, i.e., the minimum weight edge of the path between the two endpoints of $ f^i $.
		Then we have $ f(V_i^*) \geq f(U(e^i)) $ for every $i$, since $ (V_i^*, V\setminus V_i^*) $ separates the two representatives in $ f^i $ as well, and $ f(U(e^i)) $ is the weight of the minimum submodular cut between these.  Let $ \opt_{GH} $ be the connected components of $H_{k-1}$,  $ \alg=\{V_1,\dots,V_k\} $ be the partition found by Algorithm~\ref{alg:gh_greedy}, $ \alg_{GH} $ be the corresponding edges in the Gomory-Hu tree $ H $, and $ w_H $ be the weight function on $ H $. 
		
		We first argue that $f(\alg)\leq \sum_{i=1}^k d_{w_H}(V_i).$ This follows from basic properties of the Gomory-Hu tree of symmetric submodular systems, but needs some work. Let $C=\{g_1,\dots,g_{k-1}\}$ be the cut found by Algorithm~\ref{alg:gh_greedy}, which means $\sum_{i=1}^k d_{w_H}(V_i) = 2\sum_{i=1}^{k-1}w_H(g_i)$. As $f$ is symmetric, we have $2w_H(g_i) = f(U(g_i)) + f(V \setminus U(g_i))$. 
		
		Let us direct each $g_i$ away from $x$ in the tree $H$, and let $\vec{g_i}$ denote the directed edge. We then have
		\begin{align*}
			f(\alg) 
			&= 
			\sum_{i=1}^kf(V_i) 
			= 
			\sum_{i=1}^k f\left(\left(\bigcap_{\vec{g_j}\in\delta^{out}(V_i)} U(g_j)\right)\bigcap \left(\bigcap_{\vec{g_j}\in\delta^{in}(V_i)} V \setminus U(g_j)\right)\right) \\
			&\leq \sum_{i=1}^k \left(\sum_{\vec{g_j}\in\delta^{out}(V_i)} f(U(g_j)) +\sum_{\vec{g_j}\in\delta^{in}(V_i)} f(V \setminus U(g_j)) \right) =  
			\sum_{i=1}^k \sum_{g_j\in\delta(V_i)} w_H(g_j)
			=  
			\sum_{i=1}^k d_{w_H}(V_i),
		\end{align*}
		where the inequality follows by submodularity. Then
		\begin{align*}
			f(\alg)
			&\leq 
			\sum_{i=1}^k d_{w_H}(V_i)
			\leq 
			2\sum_{i=1}^{k-1} w_H(e^i)
			=  
			2\sum_{i=1}^{k-1} f(U(e^i))
			\leq 
			2\sum_{i=1}^{k-1} f(V^*_i) \\
			&\leq 
			2(1-1/k) \sum_{i=1}^{k} f(V^*_i)
			\leq 
			(2-2/k)f(\opt),
		\end{align*}
		where the second inequality holds because $\{V_1,\dots,V_k\}$ is an optimal solution of \mcmcut on $H$ by Lemma \ref{lem:tree_matroid}, and the second to last inequality holds because of the assumption that $V^*_k$ has maximum $f$ value among the classes. This concludes the proof of the theorem.
	\end{proof}
	
	\subsection{Intersection of Two Matroids}\label{sec:intersection}
	
	Lemma~\ref{lem:tree_matroid} implies the solvability on trees of the following more general problem involving two matroids: 
	
	\searchprob{\mcmcutdoub}{A graph $G=(V,E)$, edge weights $w\in \bR^E_+$, and rank-$k$ matroids $\cM_1=(V, \cB_1)$ and $\cM_2=(V, \cB_2)$ given by independence oracles, where $\cB_i$ is the family of bases of $\cM_i$.}{Find a solution to
		\begin{eqnarray*}
			&\min \left\{ \sum_{i=1}^k d_w(V_i) \colon \{ V_i\}_{i=1}^k \text{ is a partition of } V\text{ s.t. there is }  B_1\in\cB_1,B_2\in\cB_2\right.&\\ &\left.\text{ with }  |B_j\cap V_i|=1\ \text{for all}\ i\in [k], j\in  [2]\right\}.&
	\end{eqnarray*} }
	
	Here, the goal is to find a minimum multiway cut where there is a choice of terminals forming a basis of $\cM_1$, and \textit{not necessarily the same} choice of terminals forming a basis of $ \cM_2 $. 
	
	\begin{thm}\label{thm:double_tree}
		\mcmcutdoub can be solved in polynomial time if $G$ is a tree.  
	\end{thm}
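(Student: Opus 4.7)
The plan is to reduce \mcmcutdoub on trees to weighted matroid intersection, extending the approach of Corollary~\ref{lem:trees}. First I would apply Lemma~\ref{lem:tree_matroid} separately to $\cM_1$ and $\cM_2$, obtaining matroids $\cM'_1$ and $\cM'_2$ on the edge set $E$: a set $X\subseteq E$ is independent in $\cM'_j$ exactly when the components of $G-X$ admit a transversal that is independent in $\cM_j$. Independence oracles for $\cM'_1$ and $\cM'_2$ are constructed from those of $\cM_1,\cM_2$ via matroid intersection with a partition matroid, exactly as described in the paragraph after Lemma~\ref{lem:tree_matroid}.

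Next I would argue that an optimal solution to \mcmcutdoub corresponds to a common basis of $\cM'_1\cap \cM'_2$. Such a common basis has size $k-1$ and encodes a $k$-partition with connected parts for which both matroids admit a basis as a transversal. The structural claim that makes this reduction valid is that an optimal feasible partition may be chosen to have connected parts. This parallels the single-matroid argument in Corollary~\ref{lem:trees}: in a disconnected part $V_i$, each of $B_1$ and $B_2$ contributes at most one representative, so at most two components of $V_i$ are ``anchored''; any additional component can be moved along a tree edge to an adjacent part, strictly decreasing the cut while preserving feasibility. The more delicate case is when $V_i$ has exactly two components, each anchoring a representative from a different basis; here I would use the basis exchange axioms of $\cM_1$ and $\cM_2$ simultaneously to swap one of the anchored representatives out of $V_i$, making it possible to relocate the corresponding component.

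Once the reduction to connected partitions is established, Edmonds' weighted matroid intersection algorithm computes a minimum weight common basis of $\cM'_1\cap \cM'_2$ in polynomial time, and the connected components of $(V,E\setminus C)$ for the returned set $C$ give the optimal partition.

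The main obstacle is the exchange argument in the delicate case of the connectedness reduction. Unlike the single-matroid argument in Corollary~\ref{lem:trees}, one must coordinate exchanges in both $\cM_1$ and $\cM_2$ at once, possibly re-choosing both bases. Establishing that such a simultaneous exchange can always be carried out, by exploiting the unique path structure of the tree between the two anchored representatives, is the technical crux I expect the bulk of the proof to address.
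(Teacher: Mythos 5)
The claim on which your reduction rests---that an optimal feasible partition of \mcmcutdoub on a tree may be chosen to have connected parts, equivalently that $\cM'_1$ and $\cM'_2$ always admit a common basis---is false. Take $G$ to be the star $K_{1,4}$ with center $c$ and leaves $\ell_1,\ell_2,\ell_3,\ell_4$, all edge weights $1$, and $k=2$. Let $\cM_1$ be the rank-$2$ matroid on $V$ whose unique basis is $\{\ell_1,\ell_2\}$ (so $c,\ell_3,\ell_4$ are loops), and let $\cM_2$ be the rank-$2$ matroid whose unique basis is $\{\ell_3,\ell_4\}$. Then $\cM'_1$ has rank $1$ with bases $\{c\ell_1\}$ and $\{c\ell_2\}$, while $\cM'_2$ has rank $1$ with bases $\{c\ell_3\}$ and $\{c\ell_4\}$; there is no common basis. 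Indeed every connected $2$-partition of the star fails to separate either $\{\ell_1,\ell_2\}$ or $\{\ell_3,\ell_4\}$, whereas the disconnected partition $\bigl\{\{\ell_1,\ell_3\},\{c,\ell_2,\ell_4\}\bigr\}$ is feasible. Your ``delicate case'' exchange cannot be carried out here: in the part $\{\ell_1,\ell_3\}$, the $B_1$-anchor $\ell_1$ cannot be swapped for $\ell_3$ because $\ell_3$ is a loop in $\cM_1$, and symmetrically $\ell_3$ cannot be swapped for $\ell_1$ in $\cM_2$. So weighted matroid intersection for a minimum common basis is the wrong subroutine, and the reduction breaks at exactly the step you flagged as the technical crux.

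The paper sidesteps this by computing a minimum-weight common \emph{spanning set} $C$ of $\cM'_1$ and $\cM'_2$ (also solvable via matroid intersection), which always exists since the boundary of any feasible partition spans both $\cM'_1$ and $\cM'_2$. The set $C$ may have more than $k-1$ edges, so $G-C$ may have $\ell>k$ components; the paper then repeatedly merges pairs of components whose union meets $B_1$ and $B_2$ at most once each (such a pair exists whenever $\ell>k$ by a counting argument), which only removes cut edges and therefore does not increase the weight. This yields a feasible $k$-partition of weight at most $w(C)$, which is optimal because every feasible partition's boundary is a common spanning set. In short: replace ``common basis'' by ``common spanning set'' and add the merging post-processing, and the rest of your outline goes through.
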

	
	\begin{proof}
		We can construct $\cM_i'$ as in \eqref{eq:good} using $\cM_i$ ($i=1,2$), and we can find a minimum-weight common spanning set $C$ of $\cM_1'$ and $\cM_2'$ in polynomial time. Let $\{V_1,\dots,V_{\ell}\}$ denote the connected components of $E \setminus C$, and let $B_1 \in \cB_1$ and $B_2 \in \cB_2$ be bases such that $|B_j \cap V_i|\leq 1$ for all $i\in [\ell], j \in [2]$. If $\ell>k$, then there must be classes $V_i$ and $V_j$ such that $|(V_i \cup V_j)\cap B_1|\leq 1$ and $|(V_i \cup V_j)\cap B_2|\leq 1$. Remove $V_i$ and $V_j$ from the partition, and add $V_i \cup V_j$. Repeat this operation until $\ell=k$, and let $\{V_1',\dots, V_k'\}$ be the obtained feasible partition. Then $\sum_{i=1}^k d_w(V_i')\leq \sum_{i=1}^k d_w(V_i)$, so $\{V_1',\dots, V_k'\}$ is an optimal solution, because the boundary of any feasible partition is a common spanning set of $\cM_1'$ and $\cM_2'$.
	\end{proof}

	We get a significantly more difficult problem if the representatives for the two matroids are required to be the same, that is,  if we require the set of representatives to form a \textit{common basis} of $ \cM_1 $ and $ \cM_2 $. The problem is defined as follows.
	
	\searchprob{\mcmcutcom}{A graph $G=(V,E)$, edge weights $w\in \bR^E_+$, and rank-$k$ matroids $\cM_1=(V, \cB_1)$ and $\cM_2=(V, \cB_2)$ given by independence oracles, where $\cB_i$ is the family of bases of $\cM_i$.}{Find a solution to
		\begin{align*}
			\min \left\{ \sum_{i=1}^k d_w(V_i) \colon \{ V_i\}_{i=1}^k \text{ is a partition of } V \text{ s.t. there is } B\in \cB_1\cap \cB_2 \text{ with }  |B\cap V_i|=1 \text{ for all } i\in [k]\right\}.
		\end{align*}
	}
	
	We show that \mcmcutcom includes the problem of finding a common basis in the intersection of three partition matroids; this latter problem is known to be hard as it generalizes e.g. the 3-dimensional matching problem.
	
	\begin{thm}\label{thm:hardnessofcommon}
		The problem of verifying whether three partition matroids have a common basis reduces to the problem of deciding if the optimum of a \mcmcutcom instance is $0$.
	\end{thm}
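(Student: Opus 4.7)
The plan is to reduce the common-basis problem for three partition matroids $M_1, M_2, M_3$ on a common ground set $U$ (each of rank $k$) to the decision problem ``is the \mcmcutcom optimum equal to $0$?'' The key idea is to encode $M_3$ through the connected-component structure of the graph, to lift $M_1$ to the enlarged ground set in the obvious way, and to use a laminar matroid in place of $M_2$ that simultaneously encodes $M_2$ and forbids selecting two copies of the same original element.

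Let $M_3$ have blocks $C_1, \dots, C_m$ with capacities $c_1, \dots, c_m$ summing to $k$, and write $j(u)$ for the $M_3$-block of $u$. I would set $V := \{(u, \ell) : u \in U,\ \ell \in [c_{j(u)}]\}$ and construct $G$ so that, for each $(j, \ell) \in [m] \times [c_j]$, the set $K_j^\ell := \{(u, \ell) : u \in C_j\}$ becomes a unit-weight clique disjoint from the others; the connected components of $G$ are then exactly the $k$ sets $K_j^\ell$. I would take $\cM_1$ to be the partition matroid on $V$ obtained by lifting each $M_1$-block $X_i$ to $\tilde{X}_i := \{(u, \ell) : u \in X_i\}$ with unchanged capacity $c_{X_i}$, and $\cM_2$ to be the laminar matroid with outer sets $\tilde{Y}_j := \{(u, \ell) : u \in Y_j\}$ (capacity $c_{Y_j}$) and nested inner sets $\tilde{u} := \{(u, \ell) : \ell \in [c_{j(u)}]\}$ (capacity $1$); both matroids have rank $k$.

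For the forward direction of the equivalence, given a common basis $B^* \subseteq U$ I would take the partition $\{K_j^\ell\}_{(j,\ell)}$ and, enumerating $B^* \cap C_j$ as $b_j^1, \dots, b_j^{c_j}$, choose $B := \{(b_j^\ell, \ell) : j, \ell\}$; the zero-cut, partition-transversal, and $\cM_1,\cM_2$-basis conditions then all follow directly from $B^*$ being a common basis of $M_1, M_2, M_3$. For the reverse direction, a zero-cut partition into $k$ parts must correspond bijectively to the $k$ components of $G$, since any merging of two components into one part would leave another part empty; the constraint $|B \cap V_i| = 1$ together with $\cM_2$-independence forces $B$ to select exactly one copy per component while also forbidding two copies of the same $u$, so the projection $B^* := \{u : (u, \ell) \in B\}$ has exactly $c_{X_i}$, $c_{Y_j}$, $c_j$ elements in each $X_i$, $Y_j$, $C_j$, respectively, and is therefore a common basis of $M_1, M_2, M_3$.

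The only step with real subtlety — and thus the main obstacle to navigate — is the design of $\cM_2$: a plain partition-matroid lift of $M_2$ would allow $B$ to include two copies $(u, \ell_1), (u, \ell_2)$ of the same $u$, breaking the projection back onto $U$; the nested per-element capacity-$1$ sets in the laminar matroid are precisely what eliminates this loophole, and verifying laminarity and rank-$k$ then is straightforward. In the special case where all $M_3$-capacities are $1$ — i.e.\ $3$-Dimensional Matching — the construction collapses ($V = U$, the components of $G$ are the $M_3$-blocks themselves, and the inner level of $\cM_2$ becomes redundant), recovering the standard reduction from $3$DM and serving as a useful sanity check.
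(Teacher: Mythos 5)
Your reduction is correct, but it is a genuinely different construction from the one in the paper. The paper encodes $P_3$ as a depth-two tree: a new root $z$, one auxiliary vertex $v_i$ per $P_3$-block $W_i$ joined to $z$ by a zero-weight edge, and unit-weight edges from $v_i$ to the elements of $W_i$; it then extends $P_1,P_2$ to the new ground set by declaring each $v_i$ a loop and $z$ a coloop, so the unique zero-cost $(r{+}1)$-partition is $\{z\},\,W_1{+}v_1,\dots,W_r{+}v_r$, and this is feasible exactly when the three matroids have a common basis. As written, that proof takes $P_3$ to have unit capacities, i.e.\ it is essentially a reduction from three-dimensional matching. Your construction — one copy $(u,\ell)$ per unit of capacity of $u$'s $P_3$-block, one disjoint clique per slot of each $P_3$-block, a lifted partition matroid for $\cM_1$, and a laminar matroid for $\cM_2$ whose inner capacity-one sets $\tilde u$ forbid picking two copies of the same element — handles arbitrary $P_3$-capacities directly, and your reverse-direction argument (any zero-cut $k$-partition must be exactly the clique components, and the inner sets of $\cM_2$ make the projection to $U$ injective so that block-counts transfer) is sound. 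What the generality buys is paid for in heavier machinery: a disconnected clique graph in place of a tree, and a laminar matroid in place of a partition matroid with loops and a coloop. One small point worth making explicit is that you should assume $c_j\ge 1$ for every $P_3$-block $C_j$ (else delete $C_j$ and its elements), so that $\cM_1$ and $\cM_2$ really do have rank $k$ and the clique graph has exactly $k$ components.
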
	
	\begin{proof}
		Let $ P_1,P_2,P_3 $ be partition matroids on a ground set $ S $ with partition classes $ \mathcal{U} = \{U_1,\dots, U_r\}$,  $ \mathcal{V} = \{V_1,\dots, V_r\}$, and  $ \mathcal{W} = \{W_1,\dots, W_r\} $, respectively. We encode $ P_3 $ into a tree of depth two by creating a root $ z $, a depth one vertex $ v_i $ for each $ W_i $, and leaf vertices for each element of $ s\in S $. The edges are between $ v_i $ and $ z $, and for each $ s\in W_i $, we add an edge between $ s $ and $ v_i $. We add the new vertices to the ground set of $ P_1 $ and $ P_2 $ to create $ M_1 $ and $ M_2 $: each $v_i$ is added as a loop in both matroids, while $z$ is added as a free element, i.e., it is added to every base. We assign weight 0 to the edges from the root $ z $, and all other edges get unit weights. 
		
		The only possible solution with objective value $0$ is the partition formed by $\{z\}$ and $W_i+v_i$ ($i\in [r]$). This is a feasible solution if and only if $P_1$ and $P_2$ have a common basis that intersects each $W_i$ in exactly one element, that is, $ P_1$, $P_2$, and $ P_3 $ have a common basis.
	\end{proof}
	
	\subsection{Paving Matroids}
	
	In this section, we show that \submcp 
	is equivalent to \subkpart
	when the matroid $\cM$ is a paving matroid. The key observation is that if the matroid is a rank-$k$ paving matroid, then \emph{any} $k$-partition has a valid basis representative.
	
	\begin{lem}\label{lem:paving}
		Let $\cM=(V,\cB)$ a rank-$k$ paving matroid for some $k\geq 1$ and $\cP=\{V_1,\dots, V_k\}$ be a partition of $V$. Then there exists $B\in\cB$ such that $|B\cap V_i|=1$ for $i\in[k]$.
	\end{lem}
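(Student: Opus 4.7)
\medskip
\noindent\textbf{Proof plan for Lemma~\ref{lem:paving}.}
The plan is to use the hypergraph characterization of paving matroids from the proposition stated just before Section~3. Write $\cM$ using a family $\cH=\{H_1,\dots,H_q\}$ of proper subsets of $V$ with $|H_i\cap H_j|\leq k-2$ for all $i\neq j$, so that $\cB=\{X\subseteq V\colon |X|=k,\ X\not\subseteq H_i\text{ for all }i\}$. A transversal $T=\{v_1,\dots,v_k\}$ with $v_i\in V_i$ is therefore a basis if and only if $T$ is not contained in any $H_i$. The goal is to produce such a transversal.

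First, I would dispose of the trivial case $k=1$: the partition is $\{V\}$, and since $\cM$ has rank $1$, it contains some non-loop element, which forms a basis of size one.

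For $k\geq 2$, I would argue by contradiction: suppose every transversal is dependent, hence contained in some $H_i$. Pick any transversal $T_0=\{v_1,\dots,v_k\}$ and let $i_0$ be such that $T_0\subseteq H_{i_0}$. Since $H_{i_0}\subsetneq V$, there is an element $u\in V\setminus H_{i_0}$; say $u\in V_j$. Form the modified transversal $T_1\coloneqq T_0-v_j+u$. By assumption $T_1\subseteq H_{i_1}$ for some index $i_1$. Since $u\in H_{i_1}\setminus H_{i_0}$, we have $i_1\neq i_0$. But then
\begin{align*}
T_0-v_j\ \subseteq\ T_0\cap T_1\ \subseteq\ H_{i_0}\cap H_{i_1},
\end{align*}
so $|H_{i_0}\cap H_{i_1}|\geq k-1$, contradicting the defining property $|H_{i_0}\cap H_{i_1}|\leq k-2$ of the hypergraph family $\cH$.

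The only subtle point — and the part I would check carefully — is that the single-swap argument is always available: we need some element $u$ lying outside $H_{i_0}$, and we use the swap freely regardless of whether the class $V_j$ already contains $v_j$ as its unique element (we are allowed to take $u=v_j$ only if $v_j\notin H_{i_0}$, which would contradict $T_0\subseteq H_{i_0}$; otherwise $u\neq v_j$ and the swap is genuine). The properness of every $H_i$ in the hypergraph characterization is exactly what guarantees such a $u$ exists, so the argument is complete once that characterization is in hand.
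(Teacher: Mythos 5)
Your proof is correct, and it takes a genuinely different (and somewhat slicker) route than the paper's, although both hinge on the same hypergraph characterization of paving matroids. The paper's argument is constructive/bottom-up: it picks an independent transversal $X=\{v_1,\dots,v_{k-1}\}$ of the first $k-1$ classes (using that sets of size $\le k-1$ are independent), assumes no $v\in V_k$ completes it to a basis, deduces that all sets $X+v$ for $v\in V_k$ lie in a single hyperedge $H$ (since pairwise these sets force $|H_u\cap H_v|\ge k-1$), so $V_k\subseteq H$, and then repeats the argument class by class to conclude $H=V$, contradicting properness of the $H_i$. Your argument is top-down and by contradiction: take \emph{any} transversal $T_0$, which by assumption sits inside some $H_{i_0}$; pick $u\notin H_{i_0}$ and swap it in place of the representative of its class to get a second transversal $T_1\subseteq H_{i_1}$ with $i_1\neq i_0$, and then $T_0-v_j\subseteq H_{i_0}\cap H_{i_1}$ immediately violates the $\le k-2$ intersection bound. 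Your version avoids the iteration over all $k$ classes entirely and the mild bookkeeping needed to see that the ``same $H$'' persists as one varies the fixed representatives; the paper's version gives slightly more structural information (it explicitly exhibits the hyperedge $H$ absorbing $V_k$, and then all of $V$). You handled the two subtleties correctly: $k=1$ is disposed of separately, and the swap is genuine because $v_j\in T_0\subseteq H_{i_0}$ while $u\notin H_{i_0}$ forces $u\neq v_j$, so $T_1$ is still a transversal of size $k$.
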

	\begin{proof}
		Let $\cH = \{H_1,\dots,H_q\}$ be the hypergraph representation of $\cM$. Since every set of size at most $k-1$ is independent in $\cM$, there is an independent set $X=\{v_1,\dots, v_{k-1}\}$ such that $v_i\in V_i$ for $i\in[k-1]$. If $X+v$ is a basis for some $v\in V_k$, then we are done. Otherwise, for every $v\in V_k$ there exists a hyperedge $H_v\in\cH$ such that $X+v\subseteq H_v$. As $|H_u\cap H_v| =|X|= k-1$ for any distinct $u,v\in V_k$, we get that $H_u=H_v$ by the properties of the hypergraph representation of the paving matroid. Let us denote this unique hyperedge by $H$; we thus have $V_k\subseteq H$. By repeating the same argument for each partition class in place of $V_k$, we get that $H=V$, contradicting the hyperedges being proper subsets of $V$.
	\end{proof}
	
	\section{Greedy Splitting for \texorpdfstring{Sub-MCP}{\submcp}}\label{sec:greedysplit}
	
	In the analysis of the greedy splitting algorithm by Zhao, Nagamochi and Ibaraki~\cite{greedysplit}, they rely on a key lemma in their approximation guarantees for \submpart, \symsubmpart, \monsubmpart, \subkpart, \symsubkpart, and \monsubkpart.
	We show that this lemma holds even when an additional matroid constraint is imposed, and then follow the same method to prove Theorems \ref{thm:greedysplitsym}, \ref{thm:greedysplitmon}, and \ref{thm:general}; all aforementioned results of \cite{greedysplit} follow from these three theorems as well. The modified greedy splitting algorithm is described as Algorithm \ref{alg:gsa}.
	
	\begin{algorithm}[!h]
		\caption{Greedy Splitting Algorithm for \submcp}\label{alg:gsa}
		\begin{algorithmic}[1]
			\Statex \textbf{Input:} A submodular function $f\colon 2^V\to \bR_+$ and a rank-$k$ matroid $ \cM=(V,\cI) $.
			\Statex \textbf{Output:} A feasible partition $\cP$.
			\State $\cP_1 \gets \{V\}$.
			\For{$i=1,\dots, k-1$}
			\State Let $(X,W) \in \argmin \{f(X) +f(W \setminus X) - f(W)\colon X\subseteq W\in \cP_{i}, \text{there is}\ I\in \cI\text{ s.t. } I \text{ is a transversal} $ $\text{ of } (\cP_i\setminus  W) \cup \{X, W \setminus X\} \}$.
			\State $\cP_{i+1} \gets (\cP_i\setminus  W) \cup \{X, W \setminus X\} $
			\EndFor
			\State Return $\cP_k$.
		\end{algorithmic}
	\end{algorithm}
	
	\begin{remark}
		Line 3 of Algorithm \ref{alg:gsa} can be executed in polynomial time by the following subroutine.
		\begin{enumerate}
			\item For every pair $x,y$ of elements in $W$, we can decide if $x,y$ can be extended to an independent set so that we pick a single element from each class of $\cP_i\setminus W$.
			\item Now that we know which pairs $x,y$ are possible representatives, we can do the following: for each fixed pair $x,y$, find the best partition of $W$ into $X$ and $W \setminus X $ such that $x \in X$, $y \in W \setminus X$.
		\end{enumerate}
	\end{remark}
	
	\begin{lem}[Main Lemma in~\cite{greedysplit}]\label{lem:main}
		Let $\mathcal{P}_i$ be the $i^{th}$ partition returned by the greedy splitting procedure, and $\mathcal{P}= (V_1,\dots, V_i)$ be any partition satisfying the matroid constraint. Then for any submodular $f$,
		\begin{align*}
			f(\mathcal{P}_i)\leq \sum_{j=1}^{i-1} (f(V_j) + f(V \setminus V_j)) - (i-2)f(V)
		\end{align*}
	\end{lem}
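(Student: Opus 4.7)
The plan is to prove the lemma by induction on $i$, adapting the strategy of Zhao, Nagamochi, and Ibaraki~\cite{greedysplit} to the matroid-constrained setting. The base case $i=1$ is immediate: $\mathcal{P}_1 = \{V\}$, so $f(\mathcal{P}_1) = f(V) = -(i-2)f(V)$, matching the empty sum on the right-hand side.

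For the inductive step, I fix a feasible $i$-partition $\mathcal{P} = (V_1, \ldots, V_i)$ with independent transversal $T = \{t_1, \ldots, t_i\}$ (where $t_j \in V_j$) and let $S$ be an independent transversal of $\mathcal{P}_{i-1}$. First I would establish the single-element submodularity bound
\[ f(W \cap V_j) + f(W \setminus V_j) - f(W) \leq f(V_j) + f(V \setminus V_j) - f(V) \]
for arbitrary $V_j, W \subseteq V$: add the submodularity inequalities for $(V_j, W)$ and $(V \setminus V_j, W)$, then invoke submodularity once more on $(V_j \cup W,\, V \setminus (V_j \setminus W))$, a pair whose union is $V$ and whose intersection is $W$. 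Since $|T| = i > i - 1 = |\mathcal{P}_{i-1}|$, pigeonhole yields some class $W \in \mathcal{P}_{i-1}$ containing two distinct transversal elements $t_{j_0}, t_{j_1}$; splitting $W$ into $W \cap V_{j_0}$ and $W \setminus V_{j_0}$ is then nontrivial, with $t_{j_0}$ and $t_{j_1}$ landing in different halves.

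The main technical step is to verify that the refinement $\mathcal{P}'_i := (\mathcal{P}_{i-1} \setminus \{W\}) \cup \{W \cap V_{j_0}, W \setminus V_{j_0}\}$ is feasible, so that the greedy step at level $i-1$ has this split as one of its options. I would argue feasibility through Rado's theorem: for every subfamily $\mathcal{Y}$ of classes of $\mathcal{P}'_i$, $r_\cM(\bigcup \mathcal{Y}) \geq |\mathcal{Y}|$. Subfamilies avoiding the two split halves inherit this from the feasibility of $\mathcal{P}_{i-1}$; subfamilies containing one or both halves need the extra rank supplied by the two independent witness elements $t_{j_0}, t_{j_1}$ inside $W$, combined with a matroid exchange argument between $S$ and $T$. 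Once feasibility is in hand, the greedy minimum together with the single-element bound above gives $f(\mathcal{P}_i) - f(\mathcal{P}_{i-1}) \leq f(V_{j_0}) + f(V \setminus V_{j_0}) - f(V)$.

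Finally, I would apply the inductive hypothesis to the feasible $(i-1)$-partition $\mathcal{P}'$ obtained from $\mathcal{P}$ by merging $V_{j_0}$ into $V_{j_1}$ (its independent transversal is $T \setminus \{t_{j_0}\}$), together with the estimate $f(V_{j_0} \cup V_{j_1}) + f(V \setminus (V_{j_0} \cup V_{j_1})) \leq f(V_{j_0}) + f(V_{j_1}) + f(V \setminus V_{j_0}) + f(V \setminus V_{j_1}) - f(V)$, itself two applications of submodularity (once to the disjoint pair $(V_{j_0}, V_{j_1})$ and once to $(V\setminus V_{j_0}, V\setminus V_{j_1})$, whose union is $V$). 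Combining these with the greedy increment bound produces the claimed inequality on $f(\mathcal{P}_i)$, with $V_{j_1}$ playing the role of the omitted class. The principal obstacle is the matroid feasibility check: the original Zhao-Nagamochi-Ibaraki argument does not need to exhibit any transversal of the split partition, so this is the key new technical ingredient, requiring one to combine Rado's condition for $\mathcal{P}_{i-1}$ with the independent witnesses $t_{j_0}, t_{j_1}$ supplied by $T$.
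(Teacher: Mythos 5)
Your high-level skeleton (induction, single-element submodularity bound, merging $V_{j_0}$ into $V_{j_1}$ for the inductive hypothesis) matches the paper's, but there is a genuine gap in the feasibility step, and it is precisely where the matroid constraint actually bites. You pick the class $W$ of $\mathcal{P}_{i-1}$ by \emph{pigeonhole} on $T$ and then try to verify feasibility of the split via Rado plus the ``extra rank'' from $t_{j_0},t_{j_1}$. That does not work: the two witnesses $t_{j_0},t_{j_1}$ may both lie in the span of $S$, in which case they supply no extra rank at all, and the Rado condition genuinely fails for the pigeonhole-chosen $W$. A concrete example in $\mathbb{R}^5$: take $s_m=e_m$ for $m\in[4]$, $t_1=e_1+e_3$, $t_2=e_1+e_4$, $t_3=e_2+e_5$, $t_4=e_3$, $t_5=e_5$; let $\mathcal{P}_4=(\{s_1,t_1,t_2\},\{s_2,t_3,t_4,t_5\},\{s_3\},\{s_4\})$ with $S=\{s_1,\dots,s_4\}$ and $\mathcal{P}=(\{t_1,s_1\},\{t_2,s_2\},\{t_3,s_3\},\{t_4,s_4\},\{t_5\})$ with transversal $T=\{t_1,\dots,t_5\}$. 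Pigeonhole allows $W=\{s_1,t_1,t_2\}$ with $t_{j_0}=t_1,t_{j_1}=t_2$, but both lie in $\mathrm{span}(S)$, and the resulting split $(\{s_1,t_1\},\{t_2\},U_2,U_3,U_4)$ has no independent transversal: $\mathrm{span}(\{t_2,s_3,s_4\})$ already contains every element of $\{s_1,t_1\}$, so any transversal has rank at most $4<5$.

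The paper avoids this by not pigeonholing at all. It applies the matroid augmentation axiom (I3) directly to $I_{i-1}=S$ and $I_i=T$ to obtain an element $v\in T$ with $S+v$ independent -- i.e., $v\notin\mathrm{span}(S)$ -- and the class $U_h\ni v$ is the one to split. In the example above this forces $v\in\{t_3,t_5\}\subseteq U_2$, so $W=U_2$, and that split is feasible. There is then one further wrinkle the paper handles that your sketch also omits: the element $u=S\cap U_h$ may lie in the same part $V_j$ of $\mathcal{P}$ as $v$, in which case the naive split $(V_j\cap U_h, U_h\setminus V_j)$ has both $u$ and $v$ in the same half and $S+v$ is not a transversal of it. The paper resolves this by swapping $u$ for $v$ in $S$ and iterating, terminating because $|S\cap T|$ strictly increases. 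This augmentation-plus-swap argument is the key ingredient; replacing it by pigeonhole and Rado's condition on $\mathcal{P}_{i-1}$ is not sound, and the ``matroid exchange argument between $S$ and $T$'' you gesture at would, once made precise, have to become essentially the paper's argument and would moreover have to override, not merely supplement, the pigeonhole choice of $W$. The remaining pieces of your plan (the base case, the inequality $f(W\cap V_j)+f(W\setminus V_j)-f(W)\le f(V_j)+f(V\setminus V_j)-f(V)$, and closing the induction by merging $V_{j_0}$ into $V_{j_1}$, which yields the bound with $V_{j_1}$ as the omitted class without even needing your extra merge estimate) are all fine and in line with the paper.
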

	\begin{proof}
		We proceed, as in~\cite{greedysplit}, by induction on $i$. The base case $i=1$ holds trivially. In the $i^{th}$ step, we are given a partition $\mathcal{P}=(V_1,\dots, V_i)$ along with some $I_i\in \mathcal{I}$ that is also independent in the partition matroid induced by $\mathcal{P}$. Let $\mathcal{P}_{i-1}=(U_1,\dots, U_{i-1})$ be the partition created by the algorithm in the previous step,
		and $I_{i-1}\in \mathcal{I}$ be the representatives corresponding to $\mathcal{P}_{i-1}$. 
		As $|I_i|\geq |I_{i-1}|$, there must be an element $v\in I_i\setminus  I_{i-1}$ such that $I_{i-1}+v$ is independent in $\mathcal{M}$. 
		The element $v$ must be in some part of $\mathcal{P}_{i-1}$, say $U_h$, and some part of $\mathcal{P}$, say $V_j$. Let $u\in U_h$ be the element of $I_{i-1}$ in $U_h$.\\
		
		\noindent \textbf{Case 1:} Say $u\notin V_j\cap U_h $. Then $V_j\cap U_h, U_h \setminus V_j$ is a valid candidate split
		for Line 4 in Algorithm~\ref{alg:gsa}, 
		so
		\[ f(\mathcal{P}_i) - f(\mathcal{P}_{i-1}) \leq f(V_j\cap U_h) + f(U_h \setminus V_j) - f(U_h). \]
		Submodularity then implies 
		\begin{align*}
			f(\mathcal{P}_i) - f(\mathcal{P}_{i-1}) \leq f(V_j) + f(U_h \setminus V_j) - f(V_j\cup U_h)\leq f(V_j) + f(V \setminus V_j) - f(V).
		\end{align*}
		
		\noindent \textbf{Case 2:} For the other case, assume $u\in V_j\cap U_h$. Then $u \notin I_i$ since the element of $I_i$ in $V_j$ is $v$; furthermore, $I_{i-1}'=I_{i-1}+v-u$ is a valid representative for $\mathcal{P}_{i-1}$. We can repeat the same argument with $I_{i-1}'$. This process will eventually terminate, as $|I_{i-1}'\cap I_{i}|=|I_{i-1}\cap I_{i}|+1$, hence after a finite number of steps either we are in Case 1, or $I_{i-1}\subseteq I_{i}$.
		In the latter case, $u$ cannot be in $V_j$, as there is another $v\in V_j\cap I_{i}$, and we are again in Case 1.
		
		Thus, there is some $V_j\in \mathcal{P},$ such that $f(\mathcal{P}_i) - f(\mathcal{P}_{i-1}) \leq f(V_j) + f(V \setminus V_j) - f(V)$. By the induction hypothesis, when applied to $\mathcal{P}\setminus V_j$, we get
		\begin{align*}
			f(\mathcal{P}_{i-1}) \leq \sum_{j=1}^{i-2} (f(V_j) + f(V \setminus V_j)) - (i-3)f(V).
		\end{align*}
		Therefore,
		\begin{align*}
			f(\mathcal{P}_i) \leq f(\mathcal{P}_{i-1}) + f(V_j) + f(V \setminus V_j) - f(V) \leq \sum_{j=1}^{i-1} (f(V_j) + f(V \setminus V_j)) - (i-2)f(V),
		\end{align*}
		concluding the proof of the lemma.
	\end{proof}
	
	With Lemma~\ref{lem:main} in hand, the results of~\cite{greedysplit} follow immediately. In order to make the paper self-contained, we repeat them here. 
	
	\begin{proof} [Proof of Theorem~\ref{thm:general}]
		Lemma \ref{lem:main} shows that for the optimal partition $\cP=\{V_1,\dots, V_k\}$, the output $\cP_k$ of the Algorithm \ref{alg:gsa} satisfies
		\[ f(\mathcal{P}_k)\leq \sum_{j=1}^{k-1} (f(V_j) + f(V \setminus V_j)) - (k-2)f(V) \leq \sum_{j=1}^{k-1} \sum_{i=1}^k f(V_i) - (k-2)f(V) \leq (k-1)f(\cP). \]
		This concludes the proof.
	\end{proof}
	
	\begin{proof} [Proof of Theorem~\ref{thm:greedysplitsym}]
		Let $\cP^* = \{V_1^*,\dots,V_k^*\}$ be the optimal partition for \symsubmcp, ordered so that $V_1^*\leq V_2^*\leq \dots \leq V_k^*$. Then by Lemma~\ref{lem:main}, 
		\[f(\mathcal{P}_k)\leq \sum_{j=1}^{k-1} (f(V_j^*) + f(V \setminus V_j^*)) - (k-2)f(V),\]
		where $\cP_k$ is the partition returned by Algorithm \ref{alg:gsa}. Using the fact that $f$ is symmetric and submodular, we have
		\begin{align*}
			f(\mathcal{P}_k)\leq 2\sum_{j=1}^{k-1} f(V_j^*) - (k-2)f(V)\leq 2\sum_{j=1}^{k-1} f(V_j^*)\leq \left(2-\frac{2}{k}\right)f(\cP^*),
		\end{align*}
		concluding the proof of the theorem.
	\end{proof}
	
	\begin{proof} [Proof of Theorem~\ref{thm:greedysplitmon}]
		Let $\cP^* = \{V_1^*,\dots,V_k^*\}$ be the optimal partition for \monsubmcp, ordered so that $V_1^*\leq V_2^*\leq \dots \leq V_k^*$. Then, by Lemma~\ref{lem:main}, 
		\[f(\mathcal{P}_k)\leq \sum_{j=1}^{k-1} (f(V_j^*) + f(V \setminus V_j^*)) - (k-2)f(V),\]
		where $\cP_k$ is the partition returned by Algorithm \ref{alg:gsa}. Using the fact that $f$ is monotone and submodular, we have
		\begin{align*}
			f(\mathcal{P}_k)&\leq \sum_{j=1}^{k-1} f(V_j^*) + f(V\setminus V_{k-1}^*)\leq \sum_{j=1}^{k-2} f(V_j^*) + f(V_{k-1}^*) + f(V \setminus V_{k-1}^*)\\
			&\leq \left(1-\frac{2}{k}\right)f(\cP^*) + f(V_{k-1}^*) + f(V \setminus V_{k-1}^*)\leq \left(2-\frac{2}{k}\right)f(\cP^*),
		\end{align*}
		concluding the proof of the theorem.
	\end{proof}
	
	Our analysis of the $(2-2/k)$-approximation for monotone submodular functions is tight for the greedy splitting algorithm, even without the matroid constraints, by the following lemma.
	
	\begin{lem}\label{lem:tightness}
		There is an instance of monotone submodular $k$-partition for which the greedy splitting algorithm achieves exactly $(2-2/k)$-approximation.
	\end{lem}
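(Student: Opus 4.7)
\textbf{Proof plan for Lemma~\ref{lem:tightness}.}
Since the statement is existential, the plan is to exhibit an explicit instance of monotone submodular $k$-partition on which greedy splitting achieves ratio exactly $2-2/k$, and then verify by direct computation. The design of the instance will be driven by identifying the conditions under which every inequality in the proof of Theorem~\ref{thm:greedysplitmon} becomes tight.

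The first step is to pin down these tightness conditions. Tracing the chain of inequalities in the proof of Theorem~\ref{thm:greedysplitmon}, each inequality becomes an equality precisely when (i) the optimal partition $\cP^*=\{V_1^*,\dots,V_k^*\}$ has $f(V_j^*)=c$ for some common value $c$, so that $f(\cP^*)=kc$ and the ordering-based slack $\sum_{j=1}^{k-2}f(V_j^*)\leq (1-2/k)f(\cP^*)$ is tight; (ii) monotonicity is tight in the sense that $f(V\setminus V_j^*)=f(V)=(k-1)c$ for every $j$; and (iii) submodularity is tight in $f(V\setminus V_{k-1}^*)=\sum_{j\neq k-1}f(V_j^*)$. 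Under these constraints, the upper bound in Lemma~\ref{lem:main} evaluates to exactly $2(k-1)c$, so if greedy actually achieves this bound, the ratio is exactly $2-2/k$.

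The second step is to construct a monotone submodular $f$ meeting (i)--(iii) and, crucially, for which the greedy splitting algorithm actually pays $c$ at each of its $k-1$ splits. Writing $\cP_i$ for the partition after $i-1$ splits, we need every admissible split $W=X\cup(W\setminus X)$ with $W\in\cP_i$ to satisfy $f(X)+f(W\setminus X)-f(W)\geq c$, with equality attained by some split (so greedy cannot do better than $c$ at any step). A candidate construction is a coverage function of a carefully chosen hypergraph or, equivalently, the rank function of a paving-type matroid built on a ground set of size larger than $k$, so that the OPT classes $V_j^*$ have internal structure forcing every descendant split to have cost at least $c$. With this in hand, the greedy splitting algorithm incurs total cost $f(V)+(k-1)c=(k-1)c+(k-1)c=2(k-1)c$, while $f(\cP^*)=kc$, yielding the ratio $2(k-1)/k=2-2/k$.

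The main obstacle is step two: ruling out that greedy finds any split strictly cheaper than $c$ at some intermediate stage. Many natural symmetric constructions (truncated uniform matroids, simple coverage of matching hyperedges, partition matroids) admit a cost-$0$ split at some step because submodularity becomes ``saturated'' on the remaining piece, and greedy happily exploits it, collapsing the ratio to $1$. Overcoming this requires $f$ to exhibit nontrivial submodular curvature at every level of the recursion, so that no subset $W\in\cP_i$ decomposes cheaply into representatives of distinct OPT classes. The verification then reduces to a case analysis over the possible split types in each $\cP_i$, using the structural constraints (i)--(iii) to show that the minimum split cost is always exactly $c$, after which the arithmetic in step one gives the claimed ratio.
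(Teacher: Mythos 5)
Your tightness analysis is sound: conditions (i)--(iii) you extract from the proof of Theorem~\ref{thm:greedysplitmon} are exactly the right ones, and your arithmetic showing that a cost-$c$ split at each of the $k-1$ steps yields total $2(k-1)c$ against an optimum of $kc$ is correct. However, the proposal stops where the actual proof must begin: you never produce a concrete instance, and you explicitly flag ``ruling out that greedy finds any split strictly cheaper than $c$'' as an unresolved obstacle. As written, this is a plan, not a proof.

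The paper's construction is quite clean and is worth internalizing: take a partition $\{S_1,\dots,S_k\}$ of $V$ with $|S_i|\geq 2$ for all $i$, and let $f$ be the rank function of the laminar matroid whose independent sets are the $X$ with $|X\cap S_i|\leq 1$ for all $i$ and $|X|\leq k-1$. Then $f(V)=k-1$ and $f(S_i)=1$, so $f(\cP^*)=k$. For any nontrivial bipartition $(X,W\setminus X)$ of a part $W$ still meeting all classes, each $S_i$ contributes at least $1$ to $f(X)+f(W\setminus X)$, giving $f(X)+f(W\setminus X)\geq k = f(W)+1$; hence no split ever costs less than $1$, and a singleton split $\{s_i\}$ with $s_i\in S_i$ costs exactly $1$. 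With the adversarial tie-breaking that peels off such a singleton at every step, greedy ends with $\{s_1\},\dots,\{s_{k-1}\},V\setminus\{s_1,\dots,s_{k-1}\}\}$ of total cost $(k-1)+(k-1)=2k-2$, hence ratio $2-2/k$. Two things in your proposal are worth correcting. First, the matroid is \emph{laminar}, not paving: for $k\geq 4$ it has circuits of size $2$ (any two-element subset of a class $S_i$), so ``paving-type'' would be the wrong family to search. Second, you should state that the lemma is about a worst-case tie-breaking of greedy: on this same instance greedy \emph{could} also split off a whole class $S_j$ at cost $1$ and reach the optimum, so ``greedy achieves $2-2/k$'' must be read as ``some run of greedy achieves it.'' The obstacle you worry about (a cost-$0$ split appearing at some intermediate stage) simply does not arise here because every class has size $\geq 2$, so every intermediate big part still meets all $k$ classes and the lower bound $\geq 1$ on split cost persists inductively.
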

	\begin{proof}
		Let $\cP = \{S_1,\dots ,S_k\}$ be a partition of the ground set $V$ into subsets of size at least $2$. Consider the laminar matroid $\cM$ in which a set $X\subseteq V$ is independent if and only if $|V\cap S_i|\leq 1$ for $i\in[k]$ and $|X|\leq k-1$. Let $f$ be the rank function of $\cM$ -- clearly, $f$ is monotone and submodular. 
		
		For any initial split of the greedy splitting algorithm, we have $f(S) + f(V \setminus S) \geq k$. Therefore, the algorithm might choose $S$ to be a singleton. In general, as the greedy splitting algorithm proceeds, in the $i^{th}$ step the algorithm might split the class $S_i$ into a singleton and the set of remaining elements. In such a scenario, the resulting partition has the form $\{s_1,\dots,s_{k-1},V\setminus\{s_1,\dots,s_{k-1}\}\}$. The sum of the ranks of these sets is $2k-2$, while the optimal partition is $\cP$ with a total $f$ value of $k$. This shows that greedy splitting might lead to a multiplicative error of $2-2/k$.
	\end{proof}
	
	\bibliographystyle{abbrv}
	\bibliography{arxiv}
	
\end{document}